\documentclass[11pt]{article}
\usepackage{amsmath, amsfonts, amssymb, amsthm, subcaption, booktabs}
\usepackage[margin=1in]{geometry}
\usepackage{graphicx, color}
\usepackage{url}
\usepackage{mathabx}
\usepackage{algorithm}
\usepackage{algpseudocode}

\numberwithin{equation}{section}
\newtheorem{remark}{Remark}
\newtheorem{lemma}{Lemma}

\newcommand{\gb}{{\mathbf{g}}}
\newcommand{\bb}{{\mathbf{b}}}

\newcommand{\psib}{\boldsymbol\psi}
\newcommand{\phib}{\boldsymbol\phi}
\newcommand{\sigmab}{\boldsymbol\sigma}

\providecommand{\keywords}[1]
{
  \small	
  \textbf{\textit{Keywords---}} #1
}

\usepackage{authblk}

\author[1]{N.~Anders Petersson~\thanks{petersson1@llnl.gov}}
\author[2]{Chase Hodges-Heilmann}
\author[3]{Stefanie G\"unther}

\affil[1,3]{Center for Applied Scientific Computing, Lawrence Livermore National Laboratory}
\affil[2]{Department of Mathematics, University of New Mexico}

\title{Dynamical Simulations of Schr\"odinger's Equation via Rank-Adaptive Tensor Decompositions}

\date{May 1, 2026}

\begin{document}

\maketitle
\bibliographystyle{unsrt}

\begin{abstract}
We study low-rank tensor methods for the numerical solution of Schr\"odinger's equation with time-independent and explicitly time-dependent Hamiltonians, motivated by large-scale simulations of many-body quantum systems and quantum computing devices subject to time-dependent control pulses. We outline the recent application of the "basis update and Galerkin" (BUG) method for tensor trains, and describe the established TDVP and TDVP-2 algorithms based on the time-dependent variational principle. For comparison, we also consider the BUG method in the Tucker format. All these approaches enable memory efficient representations of partially entangled quantum states and thereby mitigate the exponential cost of conventional state-vector formulations. The rank-adaptivity relies on the truncated singular value decomposition, in which the rank of a matrix is reduced by setting its smallest singular values to zero, based on a threshold parameter that controls the truncation error. 
Numerical experiments on representative time-independent and time-dependent Hamiltonian models quantify the tradeoff between accuracy and compression across methods, with particular attention to the interplay between the time-step and the truncation threshold, and how the computational effort scales with the number of sub-systems in the quantum system.
\end{abstract}

\keywords{Low-rank tensor methods, Quantum dynamics, Matrix product states}

\section{Introduction}\label{sec_intro}

Classical simulations of many-body quantum systems and quantum computing devices generally become intractable as the number of coupled sub-systems (qubits) increase. This is due to the exponential growth of the quantum state vector and the associated increase in computational effort, as the numbers of sub-systems increase. For example, in the case of a quantum system consisting of $N$ two-level sub-systems, the state vector has $2^N$ elements, and the Hamiltonian is an operator of size $2^N \times 2^N$. Storing and manipulating these objects becomes intractable, even for relatively modest numbers of sub-systems.
An alternative perspective is to represent the quantum state as a high-dimensional tensor,
$|\psi\rangle \in \mathbb{C}^{d_1 \times d_2 \times \cdots \times d_N}$, where each mode corresponds to a sub-system. While the full tensor representation contains the same number of entries as the state vector, it enables the use of tensor decomposition techniques that allow for a compressed representation of the state.

Compact parameterizations of the quantum state based on a network of low-rank tensors, e.g., matrix-product states (MPS) (also known as tensor-trains), have been used in several quantum simulation applications. For example, MPS methods were used in gate-based circuit simulations to evaluate observables in a kicked Ising model on 127 qubits~\cite{BegGraCha-2024}. Another application of MPS is the recent classical computation of the ground state energy, to chemical precision, in the 76-orbital/152-qubit FeMo-cofactor model, based on the DMRG method~\cite{zhai-2026}. In terms of Hamiltonian simulations, the MPS representation has also emerged as a powerful tool for simulating the dynamics of quantum many-body systems, providing an efficient way to represent entangled quantum states in large Hilbert space dimensions. Methods such as the time-evolving block decimation method (TEBD)~\cite{Daley-2004} and algorithms based on the time-dependent variational principle~\cite{Oseledets-2011, HaeCirOsb-2011, HaeLubOse-2016, PAECKEL2019167998} have been developed to study the dynamical behavior of such systems, where the state is represented as an MPS. Existing approaches have focused on simulating the quantum dynamics by solving the time-dependent Schrödinger equation for a given time-independent Hamiltonian, e.g., a linear vibronic model including 252 normal modes~\cite{SchTurMus-2019}.

In this paper we broaden the application of tensor decomposition methods to dynamical simulations of Schr\"odinger's equation when the Hamiltonian is time-dependent, e.g., to study quantum computing devices subject to time-dependent control pulses. Accurate dynamical simulation is a key component for optimal control in quantum computing and quantum error-correction, e.g., to perform state-to-state transformations, or to implement unitary gates. By concatenating the control pulses in time for all individual gates in a quantum algorithm, dynamical simulations will also allow the overall fidelity of an algorithm to be predicted when applied to a quantum device. 

In the following we focus on time-integration of two tensor decompositions: Tucker tensors and tensor trains. We describe the nascent application of the "basis update and Galerkin" (BUG) method for tensor trains~\cite{Sulz-2026}, as well as the classical TDVP and TDVP-2 algorithms~\cite{PAECKEL2019167998}. For Tucker tensors we use the BUG method developed by Ceruti et al.~\cite{Ceruti2022}. The rank-adaptivity in the TDVP-2 and BUG algorithms relies on the truncated singular value decomposition (SVD), in which the rank of a matrix is reduced by setting its smallest singular values to zero. The rank is determined such that the error due to the truncation is smaller than a threshold parameter $\epsilon \ll 1$. To minimize the amount of storage required by the tensor decomposition, it is desirable to make the $\epsilon$-threshold as large as possible. At the same time, $\epsilon$ must be sufficiently small to maintain the accuracy of the time-integration. In numerical experiments we study the interplay between the time-step and the $\epsilon$-threshold, with respect to both accuracy and storage requirements. We also evaluate how the computational effort scales with the number of sub-systems in the quantum system.

Related works include the development of tensor-based methods for solving differential equations using alternative decompositions, such as tree-tensor networks~\cite{CerLubWal-2021,CerKreSul-2025}, improvements in the accuracy of existing tensor evolution algorithms~\cite{Kusch2025}, and low-rank approaches for open quantum systems governed by the Lindblad equation~\cite{APPELO2025114036}. In parallel, ongoing research aims to improve the efficiency, accuracy, and scalability of quantum optimal control methods~\cite{Lee-2025,PETERSSON2025113712}.

The remainder of the paper is structured as follows. The governing equations are stated in Section~\ref{sec_quant-sim} and the conventional matrix-vector approach for solving Schr\"odinger's equation is described in Section~\ref{sec_mat-vec}. The tensor train decomposition and the time-integration algorithms TDVP and TDVP-2 are presented in Section~\ref{sec_tensor-train}. The BUG algorithm for Tucker tensors and tensor trains is outlined in Section~\ref{sec_bug}, followed by numerical experiments presented in Section~\ref{sec_num-exp}. Concluding remarks are given in Section~\ref{sec_conclusions}.

\section{Quantum simulations}\label{sec_quant-sim}

We consider the dynamics of a composite quantum system, consisting of $N$ sub-systems, where the state in each sub-system belongs to a  Hilbert (complex vector) space ${\cal H}_k$ of dimension $d_k\geq 1$, for $k\in[1,N]$. The composite state then belongs to the Hilbert space ${\cal H} = {\cal H}_1 \otimes \cdots \otimes {\cal H}_N = \mathbb{C}^{d_1\times \cdots \times d_N}$ of dimension $N_{tot}=\prod_{k=1}^{N} d_k$. 

The dynamics of the system is governed by Schr\"odinger's equation,
\begin{equation}\label{eq_schrodinger}
    |\dot{\psi}(t)\rangle = -i \hat{H}(t)|\psi(t)\rangle,
\end{equation}
where $\hat{H}(t)$ denotes the Hamiltonian operator and $|\psi(t)\rangle$ is the quantum state at time $t$. The Hamiltonian is a Hermitian operator, implying that Schr\"odinger's equation is time-reversible and that the norm of the state is conserved in time. Given an orthonormal basis in each sub-system $\{|\sigma_k\rangle\}$, the state is represented by
\begin{align}
    |\psi\rangle = \sum_{\sigma_1=1}^{d_1} \cdots \sum_{\sigma_N=1}^{d_N} \psi_{\sigma_1,\ldots,\sigma_N} | \sigma_1 \cdots \sigma_N\rangle,
\end{align}
where $\psi\in\mathbb{C}^{d_1\times \cdots\times d_N}$ is an an order-$N$ tensor. The Hamiltonian is a linear operator that maps a state in $\mathcal{H}$ onto another state in the same space, $\hat{H}:\, {\cal H} \to {\cal H}$. Formally, $\hat{H} \in {\cal B(H)} = {\cal H}\otimes {\cal H}^*$, where ${\cal H}^*$ is the dual space of ${\cal H}$. The Hamiltonian is represented by
\begin{align}\label{eq:Ham-gen}
    \hat{H} = \sum_{\sigma_1=1}^{d_1} \cdots \sum_{\sigma_N=1}^{d_N} \sum_{\sigma'_1=1}^{d_1} \cdots \sum_{\sigma'_N=1}^{d_N}
    {H}_{\sigma_1,\ldots,\sigma_N; \sigma'_1,\ldots,\sigma'_N}|\sigma_1\cdots\sigma_N\rangle\langle \sigma'_1\cdots \sigma'_N|,
\end{align}
where $H\in \mathbb{C}^{d_1^2\times \cdots \times d^2_N}$, noting that both vector and dual indices are subscripts in \eqref{eq:Ham-gen}. The action of the Hamiltonian operator on a state, $|\phi\rangle = \hat{H}|\psi\rangle$ follows by tensor contraction,
\begin{align}
    |\phi\rangle = \sum_{\sigma_1=1}^{d_1} \cdots \sum_{\sigma_N=1}^{d_N}\phi_{\sigma_1,\ldots,\sigma_N} |\sigma_1\cdots \sigma_N\rangle,\quad 
    \phi_{\sigma_1,\ldots,\sigma_N} = \sum_{\sigma'_1,\ldots,\sigma'_N} {H}_{\sigma_1,\ldots,\sigma_N;\sigma'_1,\ldots,\sigma'_N} \psi_{\sigma'_1,\ldots,\sigma'_N}.\label{eq_H-action}
\end{align}

In the following we focus on a Hamiltonian describing a quantum device with fixed transition frequencies and fixed couplings, including time-dependent control pulses. In a (global) frame rotating with angular frequency $\omega^{d}$, the Hamiltonian reads
\begin{align}
    \hat{H}(t) &= \hat{H}_s + \hat{H}_c(t), \label{eq:td-hamiltonian} \\
    \hat{H}_s &= \sum_{k=1}^{N} \left((\omega_k - \omega^{d}) \hat{a}_k^\dagger \hat{a}_k
    - \frac{\xi_k}{2} \hat{a}_k^\dagger \hat{a}_k^\dagger \hat{a}_k \hat{a}_k
    + \sum_{\ell>k} J_{k\ell} (\hat{a}_k^\dagger \hat{a}_\ell + \hat{a}_k \hat{a}_\ell^\dagger)\right), \label{eq:drift-hamiltonian} \\
    \hat{H}_c(t) &= \sum_{k = 1}^Np^k(t)(\hat{a}_k + \hat{a}_k^{\dag}) + iq^k(t)(\hat{a}_k - \hat{a}_k^{\dag}). \label{eq:control-hamiltonian}
\end{align}
Here, $\hat{a}_k$ is the lowering operator, $\omega_k$ and $\xi_k$ denote the transition frequency and anharmonicity, in sub-system $k\in[1,N]$. The coupling strength between sub-system $k$ and $\ell$ is $J_{k\ell}$. Note that this model includes the time-independent Heissenberg and Ising Hamiltonians as special cases. In the current study we assume that the sub-systems (qubits) are arranged in a linear chain and we consider the dynamics due to nearest neighbor interactions. 

\section{Matrix-vector formulation}\label{sec_mat-vec}

The quantum state can be represented in vectorized form,
\begin{align}
    \vec{\psib} := \text{vec}(\psi),\quad \vec{\psib}_\alpha = \psi_{k_1,\ldots,k_N},\quad \alpha = \widetilde{\mathbb{L}}(k_1,\ldots,k_N; d_1,\ldots,d_N),
\end{align}
where $\widetilde{\mathbb{L}}$ maps the tensor tuple indices $(k_1,\ldots,k_N)$, where $k_j\in[1,d_j]$, to a linear index using lexicographical ordering (last tuple index varies the fastest). The inverse of the index map is denoted $(k_1,\ldots,k_N) = \widetilde{\mathbb{T}}(\alpha; d_1,\ldots, d_N)$, see e.g.~Ballard and Kolda~\cite{Ballard_Kolda_2025}. In the following, the dimensions of the sub-systems will be suppressed to simplify the notation. 

In vectorized form, Schr\"odingers equation is
\begin{align} \label{eq:Schroedinger}
    \frac{d}{dt} \vec{\psib} = -i \mathbf{H} \vec{\psib},\quad t\geq 0,\quad \vec{\psib}(0)= \vec{\psib}_0,
\end{align}
where $\mathbf{H}\in\mathbb{C}^{N_{tot}\times N_{tot}}$ is the Hamiltonian matrix, with $N_{tot} = \prod_{k=1}^N d_k$. Corresponding to \eqref{eq_H-action}, the action of the Hamiltonian becomes
\begin{align}
    \vec{\phib}_\alpha = \sum_{\beta=1}^{N_{tot}} \mathbf{H}_{\alpha,\beta} \vec{\psib}_\beta,\quad
    \mathbf{H}_{\alpha,\beta} = H_{\sigma_1,\ldots,\sigma_N;\sigma'_1,\ldots,\sigma'_N},\quad \alpha\in[1,N_{tot}],
\end{align}
with $(\sigma_1,\ldots,\sigma_N) = \widetilde{\mathbb{T}}(\alpha)$ and $(\sigma'_1,\ldots,\sigma'_N) = \widetilde{\mathbb{T}}(\beta)$.
When evaluating the Hamiltonian model~\eqref{eq:drift-hamiltonian}-\eqref{eq:control-hamiltonian}, the lowering operator satisfies
\begin{align}\label{eq:lowering-k}
    \hat{a}_k = I_{d_1}\otimes \cdots \otimes I_{d_{k-1}} \otimes L_{[k]} \otimes I_{d_{k+1}}\otimes \cdots \otimes I_{d_N},
\end{align}
with 
\begin{align}
    L_{[k]} = \begin{bmatrix}
        0 & 1 &          &  &  \\
          & 0 & \sqrt{2} &  &  \\
          &   &  \ddots  & \ddots &  \\
          &   &          & 0      & \sqrt{d_{k}-1} \\
          &   &          &       & 0
    \end{bmatrix}\in\mathbb{R}^{d_k\times d_k}.
\end{align}
Note that the lexicographical ordering of $\vec{\psib}$ is consistent with defining $\otimes$ as the Kronecker product. 

When the Hamiltonian matrix is constant in time, the solution of Schr\"odinger's equation in vectorized form \eqref{eq:Schroedinger} can always be obtained by matrix exponentiation,
\begin{align}
    \vec{\psib}(t) = \exp(-i t\,\mathbf{H})\vec{\psib}_0. 
\end{align}
The computational complexity of matrix exponentiation is ${\cal{O}}(N_{tot}^3)$ operations, making this an efficient way of directly integrating Schr\"odinger's equation over long time domains, without time-stepping. However, this approach is only tractable when $N_{tot}$ is not too large. For larger system sizes, or when $\mathbf{H}$ depends on time, we must rely on numerical time-stepping to approximately solve Schr\"odinger's equation. For this purpose we define a grid in time, $t_k = k \delta$, for $k=0,1,2,\dots$, where $\delta>0$ is the time-step. We denote the numerical approximation of the state by $\vec{\psib}^k \approx \vec{\psib}(t_k)$. An unconditionally stable, norm-conserving and time-reversible scheme for integrating Schr\"odinger's equation is the implicit mid-point rule (IMR):
\begin{align}
    \vec{\psib}^{k+1} &= \vec{\psib}^k + \delta \vec{\gb}^{k+1/2},\quad k=0,1,2,\ldots,\quad \vec{\psib}^0 = \vec{\psib}(0),\label{eq:IMR-1}\\
    \vec{\gb}^{k+1/2} &= -\frac{i}{2} \mathbf{H}_{k+1/2} \bigl( \vec{\psib}^{k} + \vec{\psib}^{k+1} \bigr),\label{eq:IMR-2}
\end{align}
where $\mathbf{H}_{k+1/2} = \mathbf{H}(t_k+\delta/2)$.
By substituting $\vec{\psib}^{k+1}$ from \eqref{eq:IMR-1} into the right hand side of \eqref{eq:IMR-2} we arrive at a linear system for $\vec{\gb}^{k+1/2}$,
\begin{align}\label{eq_jacobi-iter}
    \bigl( I + \frac{i\delta}{2} \mathbf{H}_{k+1/2} \bigr)\vec{\gb}^{k+1/2} = \vec{\bb}^k,\quad \vec{\bb}^k = -i \mathbf{H}(t_{k+1/2}) \vec{\psib}^{k}.
\end{align}
Here, the linear system in \eqref{eq_jacobi-iter}
can be solved iteratively by, e.g., Jacobi's method. Note that iterative methods only require the action of the Hamiltonian to be evaluated. This approach requires less storage and is often significantly more efficient compared to first evaluating the Hamiltonian matrix and then applying it to a vector, especially when the Hamiltonian depends on time. 

Even though the IMR method is unconditionally stable, the time-step must chosen to be sufficiently small to give an accurate solution, and it is advisable to test the numerical accuracy with, e.g., Richardson extrapolation. As an alternative to IMR, Schr\"odinger's equation can also be solved by an explicit time-integration method, e.g., the classical RK-4 method. Similar to solving the linear system \eqref{eq_jacobi-iter} iteratively, explicit methods also only require the action of the Hamiltonian to be evaluated. For Schr\"odinger's equation, the stability region of the explicit method must include the imaginary axis and the time-step must be sufficiently small to ensure stability. The stability restrictions in an explicit method, and the accuracy requirements in an implicit method, are closely related because they both depend on the largest eigenvalue (in magnitude) of the Hamiltonian. Because of its unconditional stability, in this work we use the IMR method together with Richardson extrapolation for estimating the accuracy in the solution.

\section{Tensor trains}\label{sec_tensor-train}

A tensor train (TT), also called a matrix product state (MPS), represents an order-$N$ tensor as a product of $N$ order-3 tensors~\cite{PAECKEL2019167998}. Consider a tensor $\psi \in \mathbb{C}^{d_1 \times d_2 \times \cdots \times d_N}$. The elements in the tensor-train decomposition of $\psi$ can be written as 
\begin{equation}
        \psi_{\sigma_1, \ldots, \sigma_N} = 
        \sum_{m_0=1}^{b_0}\cdots
        \sum_{m_N=1}^{b_N} 
        M_{[1]}(m_0, \sigma_1, m_1)
        \cdots 
        M_{[N]}(m_{N-1}, \sigma_N, m_N),\quad \sigma_k \in [1,d_k].         \label{eq_MPS-el}
\end{equation}
The factors $M_{[k]} \in \mathbb{C}^{b_{k-1} \times d_k \times b_k}$ are order-3 tensors that are referred to as cores. Further, $b_{k}\geq 1$ (with $b_0=b_N=1$) is called the bond dimension between the cores $M_{[k]}$ and $M_{[k+1]}$. The indices $\sigma_k$ are called physical indices and $b_k$ are called virtual indices. For each fixed physical index, we may define a matrix
\[
M_{[k]}^{\sigma_k} = M_{[k]}(:,\sigma_k,:), \quad \sigma_k \in [1,d_k], \quad k\in[1,N].
\]
Then, each element in the tensor $\psi$ may be written as a matrix product,
\begin{equation}\label{eq:psi-MPS}
\psi_{\sigma_1, \ldots, \sigma_N} = M_{[1]}^{\sigma_1} M_{[2]}^{\sigma_2} \cdots M_{[N]}^{\sigma_N}.
\end{equation}
Note that $M_{[1]}^{\sigma_1}$ is a row vector because $b_0=1$ and $M_{[N]}^{\sigma_N}$ is a column vector because $b_N=1$. As a result, the matrix product on the right hand side of \eqref{eq:psi-MPS} evaluates to a scalar.
In the following, indices may be suppressed if they are clear from the context, e.g.,
\begin{align}
\psi = M_{[1]}\cdot M_{[2]} \cdots M_{[N]},
\end{align}
where a centered dot ($\cdot$) represents contraction of tensors over common indices. To further simplify the notation, it can be convenient to represent a tensor train in diagrammatical format, see Figure~\ref{fig:mps_diagram}.
\begin{figure}
    \centering
    \includegraphics[height=0.15\linewidth]{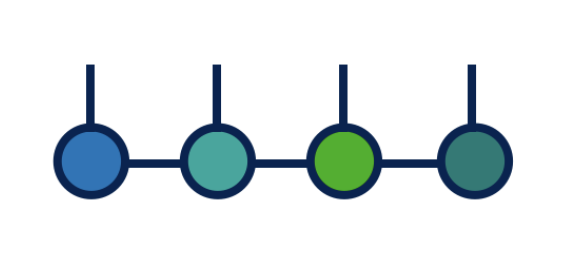}\hspace{5mm}
    \includegraphics[height=0.2\linewidth]{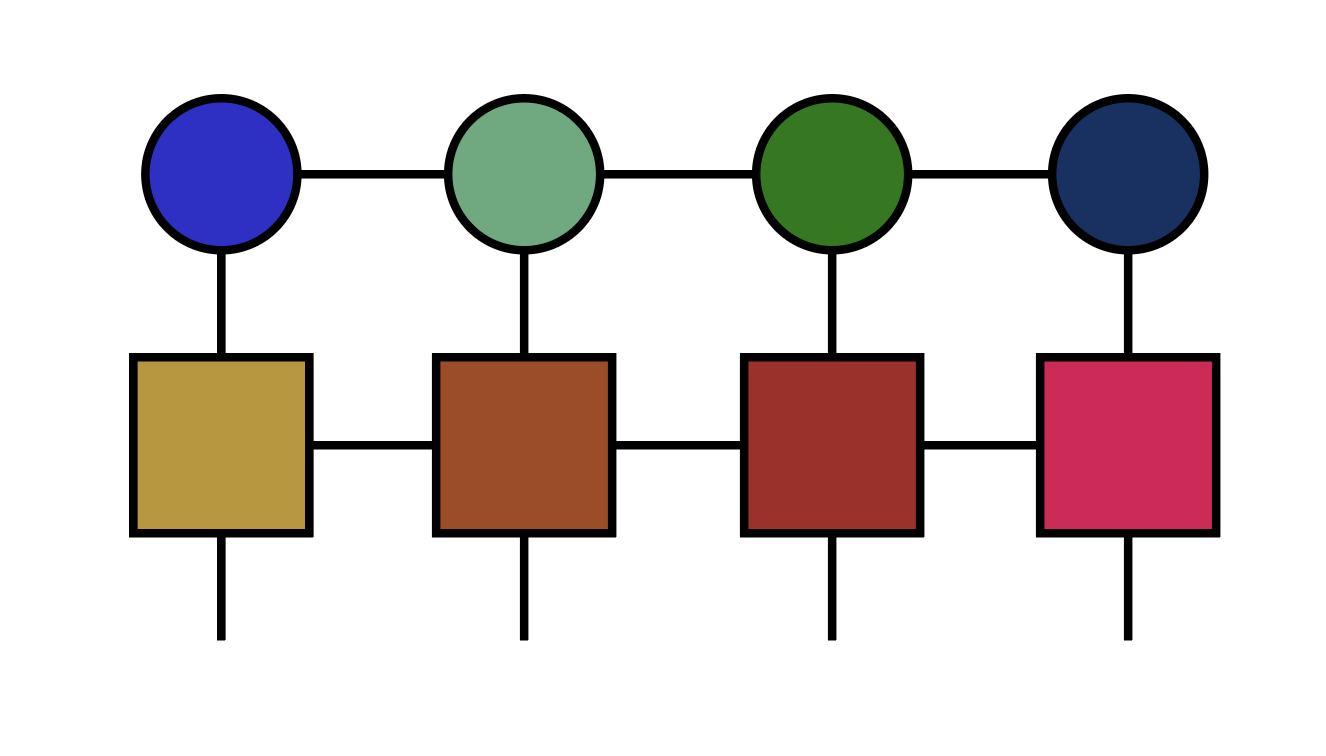}
    \caption{Left: A tensor diagram of a tensor train (MPS) of an order-4 tensor of size $d_1\times d_2\times d_3 \times d_4$, is represented by four cores (circles), each of order three. The vertical legs represent the physical indices and the horizontal legs connecting the cores correspond to contractions over virtual indices. The sizes of the first and last cores are $1\times d_1 \times b_1$ and $b_{N-1}\times d_N\times 1$, respectively; horizontal legs are suppressed on all singleton dimensions. Right: the contraction of an MPS by an order $4\times 4$ matrix product operator (MPO), represented by four (square) cores of order four.}
    \label{fig:mps_diagram}
\end{figure}
 
It is known that any tensor of order $N$ can be represented as a tensor train with $N$ cores, for example, using the TT-SVD algorithm~\cite{Ballard_Kolda_2025}. For a general tensor the bond dimensions must grow exponentially towards the center of the train, resulting in the same exponential growth of storage as in the vectorized representation. However, if the bond dimensions can be bounded by $b_k \leq B$, the total storage for $\psi$ becomes ${\cal O}(NB^2)$, which is linear in $N$, if $B$ can be bounded independently of $N$. 

The MPS representation of a given tensor is not unique because it is subject to gauge invariance. For example, we can insert the matrix identity $C^{-1}C = I$ between the cores $k$ and $k+1$ in \eqref{eq:psi-MPS}, resulting in the modified cores $\widetilde{M}_{[k]}^{\sigma_k} = M_{[k]}^{\sigma_k} C^{-1}$ and $\widetilde{M}_{[k+1]}^{\sigma_{k+1}} = C M_{[k+1]}^{\sigma_{k+1}}$, for any non-singular matrix $C\in\mathbb{C}^{b_k\times b_k}$. The gauge invariance can be exploited to improve the numerical stability of algorithms and simplify tensor contractions. In the following we will use two canonical forms of a tensor train:
\begin{align}
    \psi_{\sigma_1, \ldots, \sigma_N} = \begin{cases}
        A_{[1]}^{\sigma_1} \cdots A_{[j-1]}^{\sigma_{j-1}} M_{[j]}^{\sigma_j} B_{[j+1]}^{\sigma_{j+1}} \cdots B_{[N]}^{\sigma_N},& \text{(mixed canonical)},\\
        A_{[1]}^{\sigma_1} \cdots A_{[j]}^{\sigma_{j}} C_{[j]} B_{[j+1]}^{\sigma_{j+1}} \cdots B_{[N]}^{\sigma_N},& \text{(bond-centered canonical)}.
    \end{cases}
\end{align}
The canonical forms are defined by the left- and right-normalization conditions of the cores,
\begin{align}\label{eq:left-right-normal}
    \sum_{\sigma_k=1}^{d_k} \left(A_{[k]}^{\sigma_k}\right)^\dagger A_{[k]}^{\sigma_k} = I_{b_k},\ %
    \text{(left-normalized)},\quad
    \sum_{\sigma_k=1}^{d_k} B_{[k]}^{\sigma_k} \left(B_{[k]}^{\sigma_k}\right)^\dagger = I_{b_{k-1}},\ %
    \text{(right-normalized)}.
\end{align}
Here and in the following, left-normalized cores are denoted by $A_{[j]}$, right-normalized cores by $B_{[j]}$, and general cores by $M_{[j]}$. In the mixed canonical form, $M_{[j]}$ is called the orthogonality center. In the bond-centered canonical form, $C_{[j]}\in\mathbb{C}^{b_j\times b_j}$ is called the bond-matrix.

A quantum state on mixed canonical form can be written as
\begin{align}\label{eq:mixed-core-j}
    |\psi\rangle = \sum_{\sigma_j, m_{j-1}, m_{j}} |\psi^{L}_{j-1}(m_{j-1})\rangle \otimes \left(M_{[j]}^{\sigma_j}(m_{j-1},m_j)|\sigma_j\rangle\right) \otimes |\psi^{R}_{j+1}(m_j)\rangle,
\end{align}
where the left- and right-state vectors are defined by
\begin{align}
    |\psi^{L}_{j}(m_{j})\rangle &= \sum_{\sigma_1,\ldots,\sigma_{j}} A_{[1]}^{\sigma_1} \cdots A_{[j]}^{\sigma_{j}}(:,m_{j}) |\sigma_1\cdots \sigma_{j}\rangle,\quad
    m_{j}\in[1,b_{j}],\label{eq:psi-left}\\
    |\psi^{R}_{j+1}(m'_{j})\rangle &= \sum_{\sigma_{j+1},\ldots,\sigma_{N}} B_{[j+1]}^{\sigma_{j+1}}(m'_j,:) \cdots B_{[N]}^{\sigma_N} |\sigma_{j+1}\cdots\sigma_N\rangle,\quad
    m'_{j} \in[1,b_{j}].\label{eq:psi-right}
\end{align}
Each set of vectors has mutually orthogonal elements, as is made precise in the following lemma.
\begin{lemma}\label{lem:left-right-ortho}
    Assume that the basis vectors $\{|\sigma_k\rangle\}_{k=1}^N$ are mutually orthogonal and normalized.
    Let the set of left state vectors $\{|\psi^L_j(m_j)\rangle\}_{m_j=1}^{b_j}$ be defined by \eqref{eq:psi-left}, where the order-3 tensors $A_{[k]}$ satisfy the left-normalization condition in \eqref{eq:left-right-normal}. Then the left state vectors satisfy the orthogonality condition
    \begin{align}\label{eq:first-ortho}
        \langle \psi^L_j(\alpha) | \psi^L_j(\beta) \rangle = \delta_{\alpha,\beta},\quad \alpha,\, \beta \in [1,b_j].
    \end{align}
    Similarly, let the set of right state vectors $\{|\psi^R_j(m'_j)\rangle\}_{m'_j=1}^{b_j}$ be defined by \eqref{eq:psi-right}, where the order-3 tensors $B_{[k]}$ satisfy the right-normalization condition in \eqref{eq:left-right-normal}. Then the right state vectors satisfy the orthogonality condition
        \begin{align}\label{eq:second-ortho}
        \langle \psi^R_j(\alpha) | \psi^R_j(\beta) \rangle = \delta_{\alpha,\beta},\quad \alpha,\, \beta \in [1,b_j].
    \end{align}
\end{lemma}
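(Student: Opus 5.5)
We prove \eqref{eq:first-ortho} by induction on $j$, and obtain \eqref{eq:second-ortho} by the mirror-image argument. Orthonormality of the product basis $|\sigma_1\cdots\sigma_j\rangle$ follows from orthonormality of each $\{|\sigma_k\rangle\}$. This gives
\begin{align*}
\langle \psi^L_j(\alpha)|\psi^L_j(\beta)\rangle = \sum_{\sigma_1,\ldots,\sigma_j} \overline{\bigl(A_{[1]}^{\sigma_1}\cdots A_{[j]}^{\sigma_j}\bigr)(1,\alpha)}\,\bigl(A_{[1]}^{\sigma_1}\cdots A_{[j]}^{\sigma_j}\bigr)(1,\beta).
\end{align*}
Here the row index is $1$ because $b_0=1$. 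Equivalently, the Gram matrix $G_j\in\mathbb{C}^{b_j\times b_j}$ with entries $G_j(\alpha,\beta)=\langle \psi^L_j(\alpha)|\psi^L_j(\beta)\rangle$ is
\begin{align*}
G_j=\sum_{\sigma_1,\ldots,\sigma_j}\bigl(A_{[1]}^{\sigma_1}\cdots A_{[j]}^{\sigma_j}\bigr)^\dagger \bigl(A_{[1]}^{\sigma_1}\cdots A_{[j]}^{\sigma_j}\bigr).
\end{align*}
The claim is therefore $G_j=I_{b_j}$.

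The key step is the recursion obtained by separating the last core:
\begin{align*}
G_j=\sum_{\sigma_j}\bigl(A_{[j]}^{\sigma_j}\bigr)^\dagger\Bigl(\sum_{\sigma_1,\ldots,\sigma_{j-1}}\bigl(A_{[1]}^{\sigma_1}\cdots A_{[j-1]}^{\sigma_{j-1}}\bigr)^\dagger A_{[1]}^{\sigma_1}\cdots A_{[j-1]}^{\sigma_{j-1}}\Bigr)A_{[j]}^{\sigma_j}=\sum_{\sigma_j}\bigl(A_{[j]}^{\sigma_j}\bigr)^\dagger G_{j-1}A_{[j]}^{\sigma_j}.
\end{align*}
The base case is $G_0=1$, since $b_0=1$. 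Equivalently, $G_1=\sum_{\sigma_1}(A_{[1]}^{\sigma_1})^\dagger A_{[1]}^{\sigma_1}=I_{b_1}$, which is exactly the left-normalization condition in \eqref{eq:left-right-normal}. For the inductive step, if $G_{j-1}=I_{b_{j-1}}$, the recursion reduces to $\sum_{\sigma_j}(A_{[j]}^{\sigma_j})^\dagger A_{[j]}^{\sigma_j}$. By left-normalization of $A_{[j]}$ this equals $I_{b_j}$.

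For the right vectors we define $\widetilde G_j(\alpha,\beta)=\langle\psi^R_j(\alpha)|\psi^R_j(\beta)\rangle$, indexed by the left bond of the core $B_{[j]}$, and we induct downward from $j=N$ using $b_N=1$. Writing the matrix products as $B^{\sigma_j}_{[j]}\cdots B^{\sigma_N}_{[N]}$ (column vectors), we get $\widetilde G_j=\overline{\sum B_{[j]}^{\sigma_j}\widetilde G_{j+1}(B_{[j]}^{\sigma_j})^\dagger}$. After this recursion, the right-normalization condition closes the argument in the same way as above. The complex conjugate is harmless because the identity is real.

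The main obstacle is bookkeeping rather than substance. One has to keep track of which index is conjugated, so that the Gram matrix appears as $X^\dagger X$ on the left and as the conjugate of $XX^\dagger$ on the right. One must also match the labeling of $\psi^R$ in \eqref{eq:psi-right}, whose index sits on the bond to the left of the first core in the product, to the appropriate bond dimension. I would state the index conventions explicitly before running the induction.
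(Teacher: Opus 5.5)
Your argument is correct and is the standard induction (the Gram-matrix recursion $G_j=\sum_{\sigma_j}(A_{[j]}^{\sigma_j})^\dagger G_{j-1}A_{[j]}^{\sigma_j}$ and its mirror image) that the paper compresses into the one-line proof ``follows directly from the left- and right-normalization conditions''; you also correctly index $\psi^R_j$ by the left bond (dimension $b_{j-1}$) of $B_{[j]}$, as the definition of the right state vectors requires. The only slip is that the right-hand recursion should read $\widetilde G_j=\overline{\sum_{\sigma_j}B_{[j]}^{\sigma_j}\,\overline{\widetilde G_{j+1}}\,(B_{[j]}^{\sigma_j})^\dagger}$ (more cleanly, induct on $\overline{\widetilde G_j}=\sum_{\sigma_j,\ldots,\sigma_N}(B_{[j]}^{\sigma_j}\cdots B_{[N]}^{\sigma_N})(B_{[j]}^{\sigma_j}\cdots B_{[N]}^{\sigma_N})^\dagger$), which is harmless because $\widetilde G_{j+1}=I$ under the induction hypothesis.
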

\begin{proof} Follows directly from the left- and right-normalization conditions on $A_{[j]}$ and $B_{[k]}$.
\end{proof}

The orthogonality center in the mixed canonical form \eqref{eq:mixed-core-j} can be moved from $M_{[j]}$ to the bond between the cores $j$ and $j+1$. The first step is to factor $M_{[j]}$ into a left-normalized core and a square bond-matrix $C_{[j]}\in\mathbb{C}^{b_j\times b_j}$. Factorization is achieved by reshaping $M_{[j]}$ into a matrix $M_j^L \in\mathbb{C}^{b_{j-1} \cdot d_j \times b_j}$, followed by a singular-value decomposition $M^L_j =: U_j S_j V_j^\dagger$. A left-orthogonal core is constructed by inverse reshaping $U_j\in\mathbb{C}^{b_{j-1} \cdot d_j \times b_j}$ into the tensor $A_{[j]}\in\mathbb{C}^{b_{j-1}\times d_j \times b_j}$. Here, the left-orthogonality condition follows from $U_j^\dagger U_j = I_{b_j}$. The bond-matrix is defined by $C_{[j]} = S_j V_j^\dagger\in\mathbb{C}^{b_j\times b_j}$, resulting in
\begin{align}
    M_{[j]}^{\sigma_j}(m_{j-1},m_j) = \sum_{m'_j=1}^{b_j} A_{[j]}^{\sigma_j}(m_{j-1},m'_j) C_{[j]}(m'_j, m_j).
\end{align}
This decomposition results in the bond-centered canonical representation of the state,
\begin{align}\label{eq_schmidt-like}
    |\psi\rangle = \sum_{m_{j}, m'_{j}} C_{[j]}(m_{j},m'_j)|\psi^{L}_{j}(m_{j})\rangle \otimes |\psi^{R}_{j+1}(m'_j)\rangle.
\end{align}
The decomposition \eqref{eq_schmidt-like} is closely related to the Schmidt decomposition. The rank of the matrix $C_{[j]}$ is called the Schmidt rank and determines the entanglement between the states to the left and right of the bond between sites $j$ and $j+1$.

By contracting the bond-matrix $C_{[j]}$ with the core on the right, we can move the orthogonality center one step to the right,
\begin{align}\label{eq:ortho-right}
    M^{\sigma_{j+1}}_{[j+1]}(m_j,m_{j+1}) = \sum_{m'_j=1}^{b_j} C_{[j]}(m_j,m'_j)B_{[j+1]}^{\sigma_{j+1}}(m'_j,m_{j+1}).
\end{align}
This operation is an essential component in the TDVP algorithm (presented below) for solving Schroedinger's equation.

\subsection{Matrix product operators}

The tensor-train representation of a state can be generalized to represent the elements of the Hamiltonian~\eqref{eq:Ham-gen} as a matrix product operator (MPO),
\begin{align}\label{eq:MPO_notation}
    \hat{H}_{\sigma_1,\ldots,,\sigma_N;\sigma'_1,\ldots \sigma'_N} = 
    \sum_{w_0=1}^{b'_0} \cdots \sum_{w_N=1}^{b'_N} 
    W_{[1]}(w_0, \sigma_1,\sigma'_1, w_1) \cdots W_{[N]}(w_{N-1},\sigma_N, \sigma'_N,w_N).
\end{align}
Here, each core $W_{[k]}\in\mathbb{C}^{b'_{k-1}\times d_k\times d_k, b'_k}$ is an order 4 tensor. The first and last bond dimensions are $b'_0=b'_N=1$, and the interior bond dimensions satisfy $b'_k\geq 1$ for $k=1,\ldots,N-1$. The interior bond dimensions depend on the specific Hamiltonian model. For example, $b'_k=2$ suffices if the Hamiltonian only has local interactions within each sub-system. For all Hamiltonians with short-range interactions in one dimension, the required MPO bond dimension is small $(\approx 5)$ and is independent of the number of sub-systems~\cite{HubMccSch-2017}.

Similar to the MPS, for each pair of physical indices $(\sigma_k, \sigma'_k)$ we may define a matrix,
\begin{align}
    W_{[k]}^{\sigma_k,\sigma'_k} = W_{[k]}(:,\sigma_k, \sigma'_k,:)\in\mathbb{C}^{b'_{k-1}\times b'_k}, \quad \sigma_k, \sigma'_k \in [1,d_k],\quad k\in[1,N].
\end{align}
The MPO in matrix form allows the action of the Hamiltonian on a state to be expressed as an MPS. This operation is shown as a tensor diagram in Figure~\ref{fig:mps_diagram}. Let $|\psi\rangle$ be given as an MPS with general cores as in \eqref{eq:psi-MPS} and consider $|\phi\rangle = \hat{H}|\psi\rangle$. The resulting MPS for $|\phi\rangle$ can be obtained as the tensor product of the individual tensors, with elements
\begin{gather}\label{eq:H-psi}
    \phi_{\sigma_1,\ldots,\sigma_N} = \sum_{m_0,\ldots m_N}\sum_{w_0,\ldots,w_N} \widetilde{M}_{[1];(w_0m_0),(w_1m_1)}^{\sigma_1} \cdots \widetilde{M}_{[N];(w_{N-1}m_{N-1}),(w_N m_N)}^{\sigma_N}\\
    \widetilde{M}_{[k];(w_{k-1}m_{k-1}),(w_k m_k)}^{\sigma_k} = \sum_{\sigma'_k}
    W_{[k]}(w_{k-1}, \sigma_k,\sigma'_k, w_k)
    M_{[k]}(m_{k-1},\sigma'_k, m_k),\label{eq:H-psi-inflated}
\end{gather}
where the indexing of the core tensors in the resulting MPS is
\[
\widetilde{M}^{\sigma_k}_{[k];(w_{k-1}m_{k-1}),(w_k m_k)} := \widetilde{M}_{[k]}(m'_{k-1},\sigma_k, m'_k),\quad\text{with}\quad
m'_k = (w_k - 1)b_k + m_k.
\]
The state $|\phi\rangle$ is also in the MPS format, but with inflated bond dimensions: $b''_k = b'_k\cdot b_k$. Thus, repeated application of the Hamiltonian to a state quickly increases the storage requirements for the MPS. 
\begin{remark}
    The inflated MPS in \eqref{eq:H-psi}-\eqref{eq:H-psi-inflated} is never explicitly formed in the time-integration methods we describe below.
\end{remark}

\subsection{The time-dependent variational principle}

This section gives an overview of the time-dependent variational principle and the associated projection and contraction operators. These operators are used in the TDVP and TDVP-2 algorithms, which will be described in the following sections.

Variational approaches for solving the time-dependent Schroedinger equation date back to the works by Dirac~\cite{Dirac-1930}, Frenkel~\cite{Frenkel-1934},  McLachlan~\cite{McLachlan-1964}, and the time-dependent variational principle analyzed by Kramer~\& Saraceno~\cite{Kramer-1981} and Broeckhove et al.~\cite{Broeckhove-1988}.
In the variational approach, the evolution of the state is constrained to a manifold {\cal M} of matrix product states where the bond dimensions are fixed, resulting in the projected Schroedinger equation, 
\begin{equation}\label{eq_tdvp-eq}
    |\dot{\psi}\rangle = -i\hat{P}_{T, |\psi\rangle}\hat{H}(t)|\psi\rangle.
\end{equation}
Here, the action of the Hamiltonian on the state is projected onto the tangent plane of the manifold, where the tangent plane is centered at the current state $|\psi\rangle$. The state-dependent projector can be written as a sum of local projection operators~\cite{PAECKEL2019167998},
\begin{equation}\label{eq_tdvp-proj-sum}
    \hat{P}_{T, |\psi\rangle} = \sum_{j=1}^N \hat{P}^{L,\psi}_{j-1} \otimes \hat{I}_j \otimes \hat{P}^{R,\psi}_{j+1} - \sum_{i = 1}^{N - 1} \hat{P}^{L,\psi}_{j} \otimes \hat{P}^{R,\psi}_{j+1},
\end{equation}
where $\hat{P}^{L,\psi}_{j-1}$ and $\hat{P}^{R,\psi}_j$ are constructed from the gauge-fixed left- and right-normalized cores in the mixed-canonical \eqref{eq:mixed-core-j} and bond-centered \eqref{eq_schmidt-like} representations of the current state,
\begin{align}
    \hat{P}^{L,\psi}_{j-1} &= \sum_{m_{j-1}=1}^{d_{j-1}} |\psi^{L}_{j-1}(m_{j-1})\rangle \langle \psi^{L}_{j-1}(m_{j-1})| \otimes \hat{I}^R_{j},\\
    \hat{P}^{R,\psi}_{j+1} &= \hat{I}^L_{j} \otimes \sum_{m_j=1}^{d_j} |\psi^{R}_{j+1}(m_j)\rangle \langle \psi^{R}_{j+1}(m_j)|.
\end{align}
Here, $\hat{I}^R_j$ and $\hat{I}^L_{j}$ are the identity operators on sites $k\geq j$ and $k'\leq j$, respectively. 

When the state is on mixed canonical form with the orthogonality center at core $j$, it is straightforward to verify that $\hat{P}^{L,\psi}_{j-1}|\psi\rangle = |\psi\rangle$ and $\hat{P}^{R,\psi}_{j+1}|\psi\rangle = |\psi\rangle$. If we instead write the same state in the bond-centered canonical form, we easily see that $\hat{P}^{L,\psi}_j|\psi\rangle=|\psi\rangle$ and $\hat{P}^{R,\psi}_{j+1}|\psi\rangle=|\psi\rangle$. Since the same state can be written in either mixed canonical form (with the orthogonality center at any core), or in bond-centered form (where the bond-matrix is between any two consecutive cores), we conclude that every term in \eqref{eq_tdvp-proj-sum} is a projection operator. 

We proceed by defining a core-centered contraction operator for extracting the core when the state is on the mixed canonical form \eqref{eq:mixed-core-j} with the orthogonality center at site $j$. Note that $\langle \psi^{L}_{j-1}(m_{j-1})| \otimes \langle\sigma_j| \otimes \langle \psi^{R}_{j+1}(m_{j})|\psi\rangle = M_{[j]}^{\sigma_j}(m_{j-1}, m_j)$. We can therefore obtain all elements of $M_{[j]}^{\sigma_j}$ by defining the contraction operator as an outer product of left- and right-contraction operators,
\begin{align}
    \hat{Q}^{\sigma_j}_j = \begin{bmatrix}
       \langle \psi^{L}_{j-1}(1)| \\
       \vdots \\
       \langle \psi^{L}_{j-1}(b_{j-1})|  
    \end{bmatrix}\otimes \langle \sigma_j|\otimes 
    \Bigl[
        \langle \psi^{R}_{j+1}(1)|, \cdots, \langle \psi^{R}_{j+1}(b_{j})|
    \Bigr].
\end{align}
The core is then extracted by applying the contraction to the state,
\begin{align}\label{eq_QP}
    \hat{Q}^{\sigma_j}_j |\psi\rangle = 
    \begin{bmatrix}
        M_{[j]}^{\sigma_j}(1,1) & \cdots & M_{[j]}^{\sigma_j}(1,b_j) \\
        \vdots & \ddots & \vdots \\
        M_{[j]}^{\sigma_j}(b_{j-1},1) & \cdots & M_{[j]}^{\sigma_j}(b_{j-1},b_j)
    \end{bmatrix} = M_{[j]}^{\sigma_j},\quad\Leftrightarrow\quad
    \hat{Q}_j |\psi\rangle = M_{[j]}.
\end{align}
Another important property follows by applying the core-centered contraction operator to the $j$-th term in the first sum of \eqref{eq_tdvp-proj-sum},
\begin{multline}
    \hat{Q}_j^{\sigma_j}(m_{j-1}, m_j) 
    \left(\hat{P}^{L,\psi}_{j-1} \otimes \hat{I}_j \otimes \hat{P}^{R,\psi}_{j+1}\right) = \\
    \hat{Q}_j^{\sigma_j}(m_{j-1}, m_j)
    \sum_{m'_{j-1}, m'_j} \left(|\psi^L_{j-1}(m'_{j-1})\rangle \langle \psi^L_{j-1}(m'_{j-1})|\otimes \hat{I}_j \otimes 
    |\psi^R_{j+1}(m'_{j})\rangle \langle \psi^L_{j+1}(m'_{j})| \right) = \\
    \left(\langle \psi^L_{j-1}(m_{j-1})| \otimes \langle\sigma_j| \otimes \langle \psi^R_{j+1}(m_j)|\right) =
    \hat{Q}_j^{\sigma_j}(m_{j-1}, m_j),
\end{multline}
that is,
\begin{align}\label{eq_QP-contract}
   \hat{Q}_j 
    \left(\hat{P}^{L,\psi}_{j-1} \otimes \hat{I}_j \otimes \hat{P}^{R,\psi}_{j+1}\right) = \hat{Q}_j.
\end{align}

Assuming that the state is on the bond-centered canonical form \eqref{eq_schmidt-like}, with the bond-matrix between sites $j$ and $j+1$, we have $\langle \psi^{L}_{j}(m_{j})| \otimes \langle \psi^R_{j+1}(m'_j)|\psi\rangle = C_j(m_j,m'_j)$. The bond-centered contraction operator then follows from the outer product,
\begin{align}
    \hat{Q}^B_j = \begin{bmatrix}
       \langle \psi^{L}_{j}(1)| \\
       \vdots \\
       \langle \psi^{L}_{j}(b_{j})|    
    \end{bmatrix}\otimes
    \Bigl[
        \langle \psi^R_{j+1}(1)|, \cdots, \langle \psi^R_{j+1}(b_j)|
    \Bigr],\quad 
    \hat{Q}^B_j|\psi\rangle = C_{[j]}.
\end{align}
Furthermore, by applying the bond-centered contraction operator to the $j$-th term in the second sum of \eqref{eq_tdvp-proj-sum}, we get
\begin{align}\label{eq_QB}
    \hat{Q}^B_j
    \left(\hat{P}^{L,\psi}_{j}\otimes \hat{P}^{R,\psi}_{j+1}\right) =\hat{Q}^B_j. 
\end{align}

\subsection{The TDVP algorithm}

The TDVP algorithm~\cite{PAECKEL2019167998}, which is closely related to the projector-splitting method developed by Lubich et al.~\cite{Lub-Ose-Van-2015}, is based on the time-dependent variational principle. By inserting the explicit form of the projection operator \eqref{eq_tdvp-proj-sum} into the projected Schr\"odinger equation \eqref{eq_tdvp-eq}, we arrive at
\begin{equation}\label{eq:schroedinger-tdvp}
    |\dot{\psi}\rangle = -i\sum_{j=1}^N \left(\hat{P}^{L,\psi}_{j-1} \otimes \hat{I}_j \otimes \hat{P}^{R,\psi}_{j+1}\right)\hat{H}(t)|\psi\rangle 
    + i\sum_{j=1}^{N-1} \left(\hat{P}^{L,\psi}_{j} \otimes \hat{P}^{R,\psi}_{j+1}\right)\hat{H}(t)|\psi\rangle.
\end{equation}
In the TDVP algorithm, a splitting approach is employed in which each term on the right hand side of \eqref{eq:schroedinger-tdvp} is integrated individually and sequentially,
\begin{align}\label{eq_TDVP_sequence}
    &\begin{cases}
        |\dot{\psi}\rangle = -i \left(\hat{P}^{L,\psi}_{j-1} \otimes \hat{I}_{j} \otimes \hat{P}^{R,\psi}_{j+1}\right)\hat{H}(t)|\psi\rangle,  \\
        |\dot{\psi}\rangle = +i \left(\hat{P}^{L,\psi}_{j} \otimes \hat{P}^{R,\psi}_{j+1}\right)\hat{H}(t)|\psi\rangle,  
    \end{cases}
    \quad j=1,\ldots,N-1,\\
    &\hspace{4mm}|\dot{\psi}\rangle = -i\left(\hat{P}^{L,\psi}_{N-1} \otimes \hat{I}_N 
    \right)\hat{H}(t)|\psi\rangle.\label{eq_TDVP_last} 
\end{align}
Here we have defined $\hat{P}^{L,\psi}_{0}=1$ 
to allow for a uniform formulation of all terms. 

In TDVP, the state is calculated at discrete times, $t_k = k\delta$, where $\delta>0$ is the time-step and $k=1,\ldots,N_t$. Starting from the initial state written on mixed canonical form with the orthogonality center at the leftmost site ($j=1$), the state is evolved in two half-steps based on Strang splitting. In the first half-step, the individual cores are updated to time $t_0+\delta/2$ in a left-to-right sweep by solving \eqref{eq_TDVP_sequence}-\eqref{eq_TDVP_last}, resulting in a state on mixed canonical form with the orthogonality center at the rightmost site ($j=N$). This is followed by the second half-step in which the individual cores are evolved to time $t_0+\delta$ in a right-to-left sweep by solving by solving \eqref{eq_TDVP_sequence}-\eqref{eq_TDVP_last} in reversed order, resulting in a state in mixed canonical form with the orthogonality center back at the leftmost site. The time-stepping procedure is then repeated until the final time is reached.

By applying the core-centered contraction \eqref{eq_QP} to the first equation in \eqref{eq_TDVP_sequence}, we obtain the site-local Schroedinger equation for core $M_{[j]}$,
\begin{align}
    \dot{M}_{[j]}(t) = 
    -i\hat{Q}_j
    \left( \sum_{\sigma'_1,\ldots,\sigma'_N}
    \hat{H}^{\sigmab,\sigmab'}(t)
     A_{[1]}^{\sigma'_1} \cdots A_{[j-1]}^{\sigma'_{j-1}} M_{[j]}^{\sigma'_j}(t) B_{[j+1]}^{\sigma'_{j+1}} \cdots B_{[N]}^{\sigma'_N}|\sigmab'\rangle\right)
\end{align}
Here, $\hat{H}(t)$ is expressed in MPO format as in \eqref{eq:H-psi}.
On the right hand side, note that only the core $M_{[j]}$ depends on time; all other factors are fixed and can be pre-computed. We can therefore define an effective Hamiltonian acting on $M_{[j]}$ and write the site-local equation as
\begin{align}\label{eq_Mj-dot}
    \dot{M}_{[j]} = -i \hat{H}_j^{eff} \cdot {M}_{[j]},\quad
    \hat{H}_j^{eff} = L_{j-1}(t_0+\delta/2)\cdot W_{[j]}\cdot R_{j+1}(t_0),\quad t\in[t_0,t_0+\delta/2],
\end{align}
using $M_{[j]}(t_0)$ as initial condition. The effective Hamiltonian $\hat{H}^{eff}_j$ follows as a contraction between the contributions from core $W_{[j]}$ in the MPO, as well as the environments on the left ($L_{j-1}$) and right ($R_{j+1}$) of the orthogonality center. This decomposition is depicted as a tensor diagram in Figure \ref{fig:1site_effective} (left). We note that evaluating the action of the effective Hamiltonian only requires contractions of low-order tensors~\cite{PAECKEL2019167998}. Further, the right environments $(R_N, R_{N-1},\ldots R_2)$ can be recursively precomputed in a right-to-left sweep. The left environment $(L_{j-1})$ is updated based on $L_{j-2}$ during the left-to-right sweep. The resulting differential equation can be solved by IMR combined with an Jacobi iteration solving the associated linear system, cf.~\eqref{eq_jacobi-iter}.
\begin{figure}
    \centering
    \includegraphics[height=0.3\linewidth]{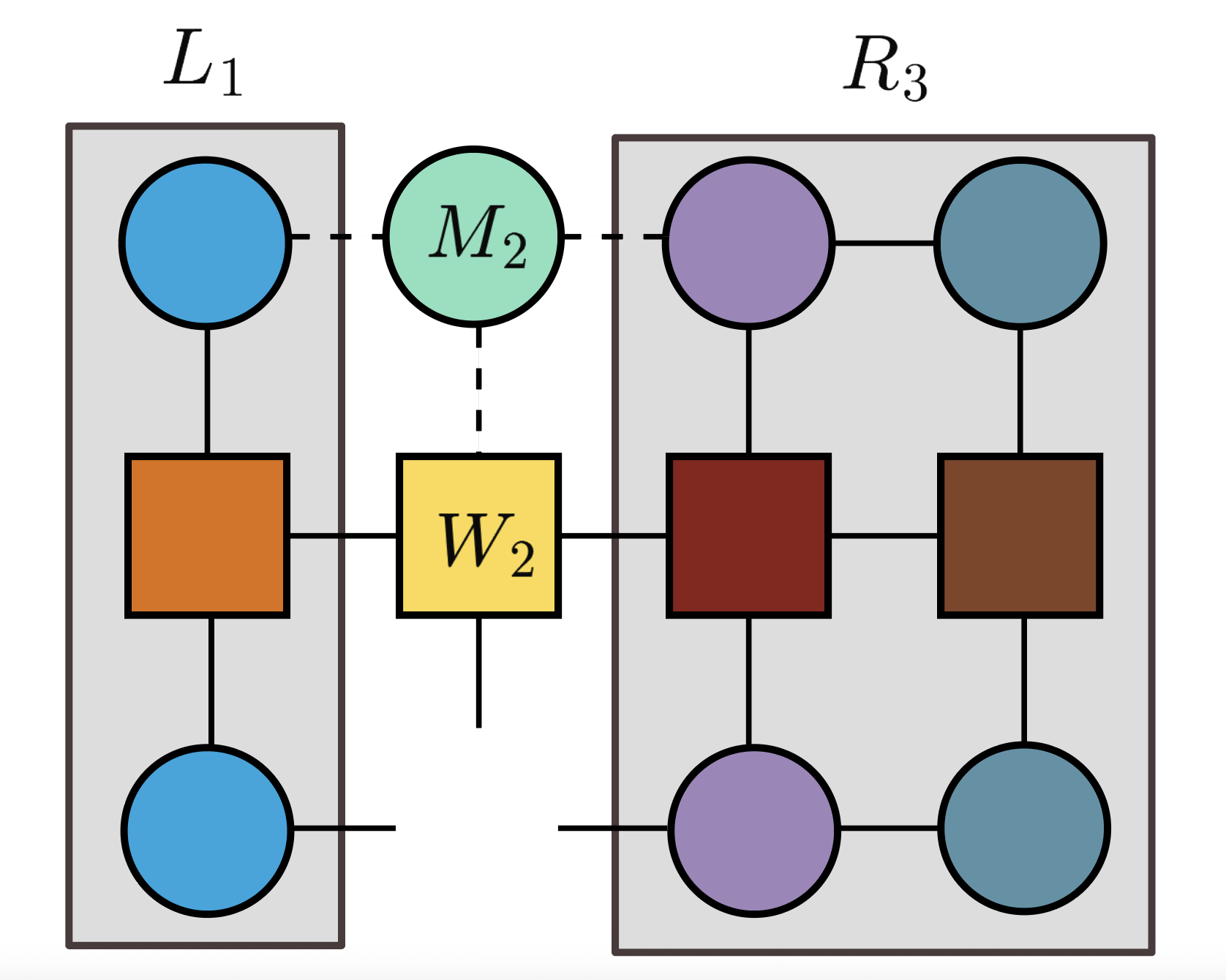} \hspace{10mm}
    \includegraphics[height=0.3\linewidth]{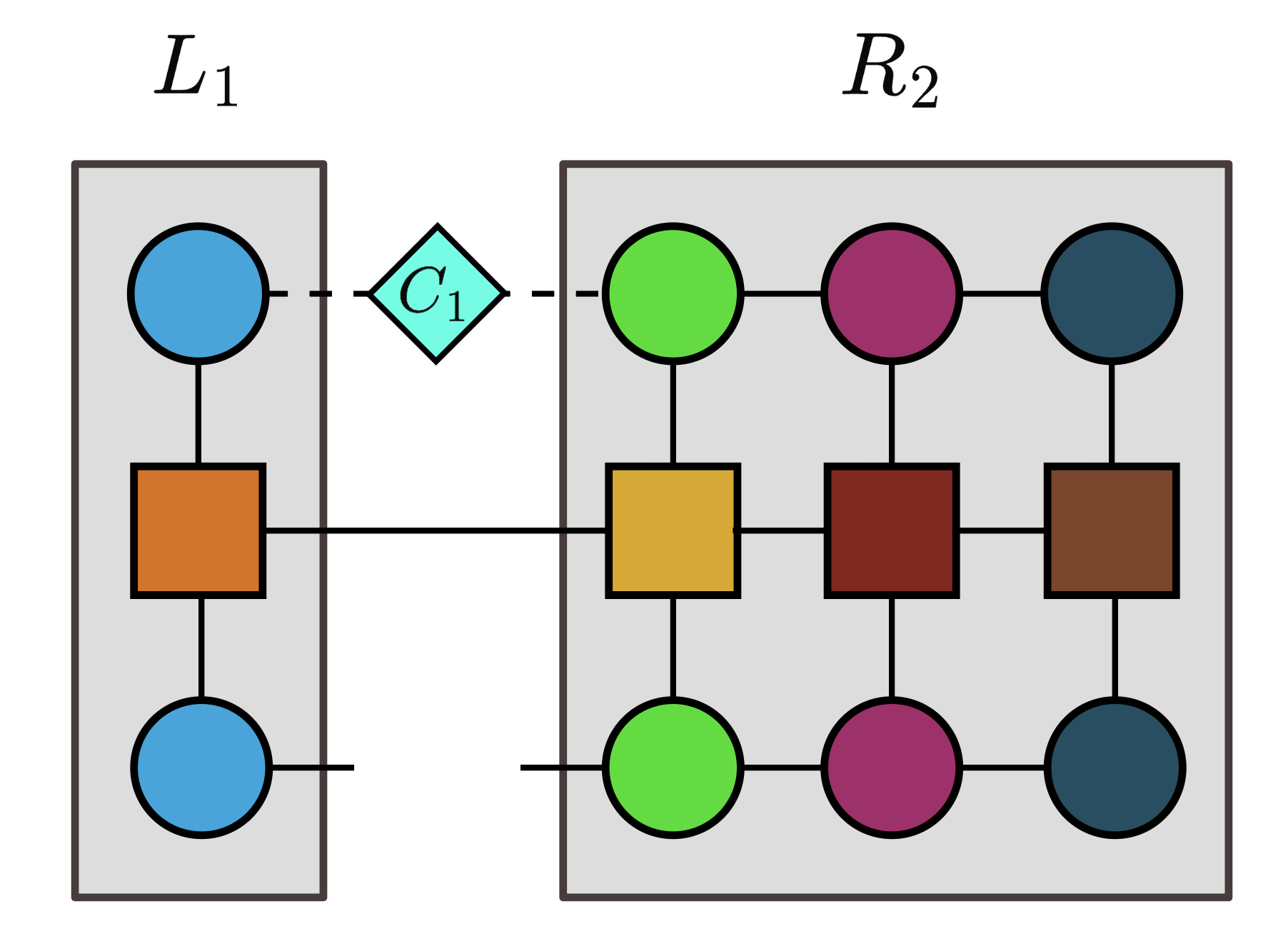}
    \caption{TDVP: evaluating the action of the effective Hamiltonian on the state in a 4-site system. The MPS of the state is in the top row, the MPO of the Hamiltonian in the middle row, and the resulting action is represented by the open legs in the bottom row. Left: the single-site Hamiltonian acting on the (green) core $M_{[2]}$, connected with dashed legs. Right: the bond-centered Hamiltonian acting on the (light blue) bond-matrix $C_{[1]}$, connected with dashed legs between the first and second core.}
    \label{fig:1site_effective}
\end{figure}

After the core $M_{[j]}$ has been evolved to time $t_0+\delta/2$, the orthogonality center is moved one step to the right as in \eqref{eq:ortho-right}, defining the left-normalized tensor $A_{[j]}(t_0+\delta/2)$ and the bond-matrix $C_{[j]}(t_0+\delta/2)$. The TDVP method then proceeds by solving the bond-matrix differential equation 
\begin{equation}\label{eq:schroedinger-tdvp-bwd}
    \dot{C}_{[j]} = +i\hat{K}^{eff}_j\cdot {C}_{[j]},\quad t_0+\delta \geq t\geq t_0,
\end{equation}
backwards in time. Again, the effective Hamiltonian $\hat{K}^{eff}_j = L_{j}(t_0+\delta/2)\cdot R_{j+1}(t_0)$ can be formed by contractions of low-order tensors as is illustrated in Figure~\ref{fig:1site_effective} (right). Given $C_{[j]}(t_0)$, the core at the next orthogonality center then follows from $M_{[j+1]}(t_0) := C_{[j]}(t_0) \cdot B_{[j+1]}(t_0)$. The left-to-right sweep is repeated until the rightmost site is reached, resulting in a mixed-canonical representation of the state at time $t_0+\delta/2$, with orthogonality center at site $N$. 

The second half-step of TDVP proceeds in a similar way but updates the cores in the opposite order, starting from the rightmost site and sweeping left, resulting in all cores being updated to time-level $t_k+\delta$. 

The TDVP algorithm conserves both the norm and the energy of the state during the time evolution~\cite{PAECKEL2019167998}. However, the accuracy of the solution depends on the bond dimensions that are fixed throughout the time evolution, i.e., they need to be assigned before the state evolution is known. Hence, the main difficulty with TDVP is to choose the bond dimensions such that they are large enough to accurately capture the state dynamics, but small enough to avoid making the problem computationally intractable. 

\subsection{The TDVP-2 algorithm}

The TDVP-2 algorithm overcomes the main shortcoming of the TDVP algorithm by adaptively changing the bond dimensions based on a threshold parameter, $\epsilon$, in a truncated SVD that will be described below.

Similar to TDVP, the TDVP-2 algorithm is based on a Strang splitting of the projected Schroedinger equation that evolves the state in two half-steps. To initialize, the state is first transformed on mixed canonical form with the orthogonality center at the leftmost site $(j=1)$. The first half-step in TDVP-2 proceeds by contracting $M_{[j]}$ and $B_{[j+1]}$ to form the merged order-4 tensor
\begin{align}
    M_{[j,j+1]}^{\sigma_j,\sigma_{j+1}}(m_{j-1},m_{j+1)} = \sum_{m'_{j}} M_{[j]}^{\sigma_j}(m_{j-1},m'_j) B_{[j+1]}^{\sigma_{j+1}}(m'_j, m_{j+1}),\quad    
    M_{[j,j+1]} = M_{[j]}\cdot B_{[j+1]}.
\end{align}
This tensor is then evolved forwards in time under an effective two-site Hamiltonian,
\begin{align}
    \dot{M}_{[j,j+1]} &= -i \hat{H}_{j,j+1}^{eff}\cdot M_{[j,j+1]},\quad  t_0\leq t\leq t_0+\delta/2,\\
    \hat{H}_{j,j+1}^{eff} &= L_{j-1}(t_0+\delta/2)\cdot W_{[j]}\cdot W_{[j+1]}\cdot R_{j+2}(t_0).
\end{align}
The action of the effective Hamiltonian on the merged core $M_{[j,j+1]}$ is illustrated in Figure~\ref{fig:2site_effective}.
\begin{figure}[tb]
    \centering
    \includegraphics[height=0.3\linewidth]{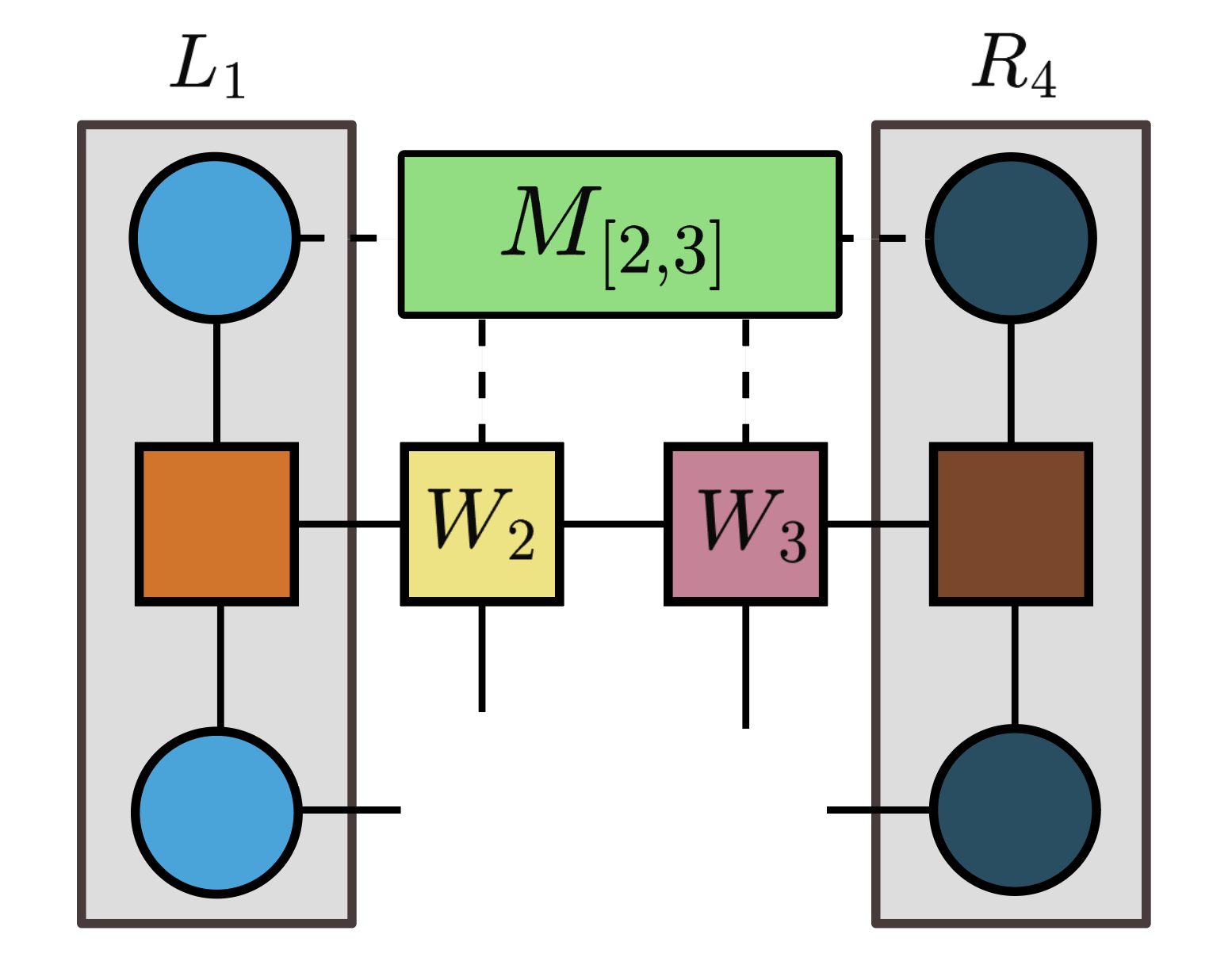} 
    \caption{TDVP-2: evaluating the action of the effective two-site Hamiltonian on the merged (green rectangular) core. The MPS of the state is in the top row, the MPO of the Hamiltonian in the middle row, and the resulting action is represented by the four open legs in the bottom row. }
    \label{fig:2site_effective}
\end{figure}
A truncated SVD is then applied to split the order-4 tensor back into two order-3 tensors, in which the threshold parameter ($\varepsilon$) controls the accuracy of the splitting and determines how the bond dimension $b_j$ needs to be modified (see Appendix~\ref{app:svd} for details). This results in
\begin{align}
    M_{[j,j+1]}(t_0+\delta/2) &\approx A_{[j]}(t_0+\delta/2) \cdot C_{[j]}(t_0+\delta/2) \cdot B_{[j+1]}(t_0+\delta/2),\\
    M_{[j]}(t_0+\delta/2) & := C_{[j]}(t_0+\delta/2) \cdot B_{[j+1]}(t_0+\delta/2),    
\end{align}
i.e., the orthogonality center is now at site $j+1$, but at time-level $t_0+\delta/2$. The next step is to evolve $M_{[j+1]}$ backwards in time by solving the single-site differential equation
\begin{align}
    \dot{M}_{[j+1]} = -i \hat{H}_{j+1}^{eff} \cdot {M}_{[j+1]},\quad
    \hat{H}_{j+1}^{eff} = L_{j}(t_0+\delta/2)\cdot W_{[j+1]}\cdot R_{j+2}(t_0),\quad t_0+\delta/2\geq t\geq t_0,
\end{align}    
resulting in $M_{j+1}(t_0)$. Note that this equation is identical to \eqref{eq_Mj-dot} in TDVP, but it is evolved backwards in time. The merged core, $M_{[j+1,j+2]}(t_0)$, can then formed and the process is repeated until the rightmost site is reached. Similar to TDVP, the second half-step updates the cores in the opposite order in a right-to-left sweep resulting in an approximation of the state at time $t_0+\delta$.

\section{The basis update and Galerkin (BUG) algorithm}\label{sec_bug}

The BUG algorithm is a time integration method that can be used to solve Schr\"odinger's equation when the state is represented by a tensor decomposition. The most basic case occurs for a tensor of order 2, i.e., a matrix differential equation. The BUG algorithm is based on a splitting approach that avoids stiffness due to small singular values in the evolved matrix~\cite{Ceruti2022}, making explicit time-integration methods tractable and improving the convergence of iterative solvers within an implicit method. The basic BUG algorithm has been generalized to evolve tensor differential equations where the state is represented by an arbitrary hierarchical tree tensor decomposition, see Lubich et al.~\cite{BUG_TreeTensorNetworks}. While this algorithm also applies to Tucker tensors and tensor trains as special cases, it is non-trivial to decipher a practical implementation from this very general description. In this section we first outline the rank-adaptive BUG algorithms for matrices and Tucker tensors, followed by the BUG algorithm for tensor trains (MPS-BUG).

\subsection{BUG for matrices and Tucker tensors}

The basic BUG algorithm applies to matrix differential equations. Consider the order-2 tensor (matrix) $T \in \mathbb{C}^{d_1 \times d_2}$ governed by the differential equation,
\begin{align}
    \dot{T} = F(t, T),\quad t > 0,\quad T(0) = T_0,\label{eq_Mat-DiffFQ}
\end{align}
where the function $F\in \{\mathbb{R}\times \mathbb{C}^{d_1 \times d_2} \to \mathbb{C}^{d_1 \times d_2}\}$ is assumed to be sufficiently regular.

The order-2 tensor is represented in the form $T=USV^\dagger$, with $U^\dagger U = I$ and $V^\dagger V = I$. This resembles a singular value decomposition, except that the matrix $S$ is not assumed to be diagonal. As before, we denote the time-step by $\delta > 0$. Starting from the $T_0=T(t_0)$, factored as $T_0 = U_0 S_0 V_0^\dagger$, one time-step of the rank-adaptive BUG scheme calculates $T_1\approx T(t_1)$, with $t_1=t_0+\delta$, as described in Algorithm~\ref{alg:BUG-matrix}. The algorithm is repeated until the desired final time is reached.
\begin{algorithm} 
  \caption{One time-step with rank-adaptive BUG for matrix differential equations.} \label{alg:BUG-matrix}
  \begin{algorithmic}[1]
    \State \textbf{Inputs:} Initial state at time $t_0$: $T_0 = U_0 S_0 V_0^\dagger$, SVD threshold $\epsilon\geq 0$, time-step $\delta$.
    \State \textbf{Outputs:} State at time $t_1 = t_0+\delta$: $T_1 = U_1 S_1 V_1^\dagger$.
    \State \textbf{K-step:} Integrate the differential equation:
    \[
    \dot{K}(t) = F\bigl(t, K(t)V_0^{\dag}\bigr)\,V_0,\quad t_0<t\leq t_1,\quad K(t_0) = U_0 S_0.
    \]
    \State \textbf{QR-step \#1:} QR-factorize the concatenated matrix [$K(t_1) | U_0] =: \hat{U}_1 R_1$.
    \State \textbf{L-step:} Integrate the differential equation: 
    \[
        \dot{L}(t) = \Bigl(F\bigl(t, U_0L(t)^{\dag}\bigr)\Bigr)^{\dag} U_0, \quad t_0<t\leq t_1,\quad L(t_0) = V_0 S_0^{\dag}.
    \]
    \State \textbf{QR-step \#2:} QR-factorize the concatenated matrix $[L(t_1) | V_0] =: \hat{V}_1 R_2$.
    \State \textbf{S-step:} Integrate the differential equation: 
    \[
        \dot{\hat{S}}(t) = \hat{U}_1^{\dag} F\bigl(t, \hat{U}_1 \hat{S}(t) \hat{V}_1^{\dag}\bigr)\,\hat{V}_1,\quad t_0<t\leq t_1,\quad \hat{S}(t_0) = \hat{U}_1^{\dag} U_0 S_0 V_0^{\dag} \hat{V}_1.
    \]
    \State \textbf{Truncate bond dimension:} Calculate $\epsilon$-truncated SVD : 
    $\hat{S}(t_1) \approx: U_1 S_1 V_1^\dagger$.
    \State \textbf{State at time $t_1$:} $T_1 = U_1 S_1 V_1^{\dag}$.
  \end{algorithmic}
\end{algorithm}


We proceed by discussing the BUG algorithm for Tucker tensors.
Any tensor $Y \in \mathbb{C}^{d_1 \times ... \times d_N}$ can be represented in Tucker format as a factorization of one order-$N$ tensor $G$ and $N$ matrices $U_i$~\cite{Ballard_Kolda_2025},
\begin{align}
        Y(i_1, \ldots, i_N) = \sum_{j_1 = 1}^{b_1}\cdots \sum_{j_N = 1}^{b_N}G(j_1, \ldots, j_N)U_1(i_1, j_1)\cdots U_N(i_N, j_N).\label{eq_Tucker-el}
\end{align}
Here, $G \in \mathbb{C}^{b_1\times ... \times b_N}$ is called the core, and $U_k \in \mathbb{C}^{d_k \times b_k}$ for $k\in[1,N]$, are the factor matrices. Note that the Tucker format can only give a compressed representation of $Y$ when the bond dimensions are smaller than the physical dimensions, i.e., $b_k < d_k$ for $k\in[1,N]$. A Tucker-tensor decomposition can be written as a sequence of Tensor-Times-Matrix (TTM) products~\cite{Ballard_Kolda_2025}, denoted as
\begin{align}\label{eq_Tucker-TTM}
        Y = G \times_1 U_1 \cdots \times_N U_N = G \times_{j=1}^N U_j,\quad \text{or}\quad
        Y = \ldbrack G; U_1,\ldots, U_N \rdbrack.
\end{align}

Let the Tucker tensor defined above satisfy the tensor differential equation, 
\begin{equation}\label{eq_Tensor-DiffEq}
    \dot{Y}(t) = F(t, Y(t)),\quad t_0 < t \leq T,\quad 
    Y(t_0) = \ldbrack G(t_0); U_1(t_0), \ldots, U_N(t_0) \rdbrack =: Y_0.
\end{equation}
As before, the function $F\in\{ \mathbb{R} \times \mathbb{C}^{d_1 \times ... \times d_N} \to \mathbb{C}^{d_1 \times ... \times d_N} \}$ is assumed to be sufficiently regular. The rank-adaptive BUG algorithm~\cite{CerKusLub-2022} computes a Tucker tensor $Y_1 \approx Y(t_1)$, where 
\begin{equation}\label{eq_bug-tucker-step}
    Y_1 = \ldbrack G(t_1), U_1(t_1), \ldots, U_N(t_1) \rdbrack.
\end{equation}
Here, the $K$ and $L$ steps in Algorithm~\ref{alg:BUG-matrix} are replaced by $N$ sub-steps for solving the corresponding matricised differential equations. Here we only remark on how the algorithm is used for solving Schr\"odinger's equation. In this case we need to evaluate $F(t,Y) = -i\hat{H}(t)Y$. Using lexicographical ordering,
\[
\text{vec}(Y) = (U_1\otimes \cdots \otimes U_N) \text{vec}(G).
\]
In general the Hamiltonian can be written as $\hat{H} = \sum_j \hat{H}_j$, where each term satisfies,
\[
\hat{H}_j = S_1 \otimes \cdots \otimes S_N,\quad
\hat{H}_j\text{vec}(Y) = \left(S_1 U_1\otimes \cdots \otimes S_N U_N\right)\text{vec}(G).
\]
The action of each term in the Hamiltonian on a Tucker tensor thus results in another Tucker tensor, with the same core and same dimensions of the factor matrices. Such Tucker tensors can be added efficiently, e.g., by calculating a site-local basis for the factor matrices at each site. We conclude that the action of the Hamiltonian on a Tucker tensor can be performed efficiently through site-local operations. For further details, we refer to Ceruti et al.~\cite{CerKusLub-2022}.

\subsection{BUG for tensor trains}\label{sec:bug_tt}

Compared to the TDVP-2 method~\cite{PAECKEL2019167998} the BUG algorithm for tensor trains has one significant advantage: each sub-step in BUG only involves forwards in time evolution. While this may not be important for the Schr\"odinger equation, which is time-reversible, the backwards sub-step within TDVP-2 could lead to time-stepping instabilities for dissipative differential equations. For example, in the Monte-Carlo wave function method where the evolution is governed by Schrödinger's equation with a non-Hermitian effective Hamiltonian~\cite{Molmer-93}.

In contrast to the TDVP and TDVP-2 algorithms, where the orthogonality center is moved from one core to the next, the MPS-BUG method keeps the orthogonality center fixed throughout the time-stepping, typically somewhere near the middle of the train. In the following, assume the state to be given as an MPS at time $t_0$, with the orthogonality center at site $j_c$,
\begin{equation}
    \psi_{\sigma_1,\ldots,\sigma_N}(t_0) =
    A_{[1]}^{\sigma_1}(t_0)\cdots A_{[j_c - 1]}^{\sigma_{j_c-1}}(t_0)M_{[j_c]}^{\sigma_{j_c}}(t_0) B_{[j_c + 1]}^{\sigma_{j_c+1}}(t_0) \ldots B_{[N]}^{\sigma_{N}}(t_0),
\end{equation}
where the tensor cores have the sizes
\begin{align}
    A_{[k]}(t_0) \in \mathbb{C}^{b_{k - 1} \times d_k \times b_k}, \quad 
    M_{[j]}(t_0) \in \mathbb{C}^{b_{j -1}\times d_j \times b_j}, \quad 
    B_{[m]}(t_0) \in \mathbb{C}^{b_{m - 1} \times d_m \times b_{m}},
\end{align}
for $k \in [1, j_c -1]$ and $m \in [j_c + 1, N]$, with $b_0=b_N=1$. 

In each timestep of the MPS-BUG algorithm, the cores to the left of the orthogonality center, $A^{\sigma_k}_{[k]}$ for $k\in[1,j_c-1]$, are updated in a left-to-right sweep, while the cores to the right of the orthogonality center, $B^{\sigma_m}_{[m]}$ for $m \in[j_c+1,N]$, are updated in a right-to-left sweep. These sweeps are independent of each another and can be performed concurrently. The core at the orthogonality center, $M^{\sigma_{j_c}}_{[j_c]}$, is updated after all $A_{[k]}(t_1)$ and $B_{[m]}(t_1)$ have been calculated. During these operations, all bond dimensions are temporarily doubled to account for potential increases in entanglement. At the end of each timestep, the bond dimensions are re-evaluated based on a truncated singular value decomposition of each core, where singular values smaller than $\epsilon$ are removed.

Evolving the state by one time-step results in another MPS,
\begin{align}\label{eq_psi-t1}
    \psi_{\sigma_1,\ldots,\sigma_N}(t_1) \approx
    A_{[1]}^{\sigma_1}(t_1)\cdots A_{[j_c - 1]}^{\sigma_{j_c-1}}(t_1)
    M_{[j_c]}^{\sigma_{j_c}}(t_1)
    B_{[j_c + 1]}^{\sigma_{j_c+1}}(t_1) \cdots B_{[N]}^{\sigma_{N}}(t_1),\quad t_1=t_0+\delta,
\end{align}
where the cores temporarily are assigned doubled bond dimensions, 
\begin{align}\label{eq_psi-t1-sizes}
    A_{[k]}(t_1) \in \mathbb{C}^{2b_{k - 1} \times d_k \times 2b_k}, \quad
    M_{[j_c]}(t_1) \in \mathbb{C}^{2b_{j_c -1}\times d_j \times 2b_{j_c}}, \quad 
    B_{[m]}(t_1) \in \mathbb{C}^{2b_{m - 1} \times d_m \times 2b_{m}},
\end{align}
for $k \in [2, j_c -1]$ and $m \in [j_c + 1, N-1]$. Note that the bond dimensions at both ends of the tensor train are not doubled. Instead, $A_{[1]}(t_1) \in \mathbb{C}^{1 \times d_1 \times 2b_1}$ and $B_{[N]}(t_1) \in \mathbb{C}^{2b_{N - 1} \times d_N \times 1}$. 

Similar to the site-local differential equation in the TDVP algorithm \eqref{eq_Mj-dot}, the sites to the left and right of the orthogonality center are evolved by sequentially by solving local Schr\"odinger equations based on effective Hamiltonians. Sites to the left of the orthogonality center are updated by solving the site-local Schr\"odinger equation:
\begin{align}\label{eq_bug_tt_left}
    \dot{\widetilde{A}}_{[k]} = -i\hat{H}^{eff}_{k} \cdot \widetilde{A}_{[k]}, \quad \hat{H}_k^{eff} = L_{k -1}(t_1)\cdot W_{[k]} \cdot R_{k + 1}(t_0),\quad k \in [1, j_c - 1],
\end{align}
where $\widetilde{A}_{[k]} \in\mathbb{C}^{2b_{k-1}\times d_k\times b_k}$ (except for $k=1$ where the first mode always has dimension 1).
A tensor diagram for evaluating the action of the effective Hamiltonian is shown in Figure~\ref{fig:h_eff_bug}. 
\begin{figure}
    \centering
    \includegraphics[width=0.45\linewidth]{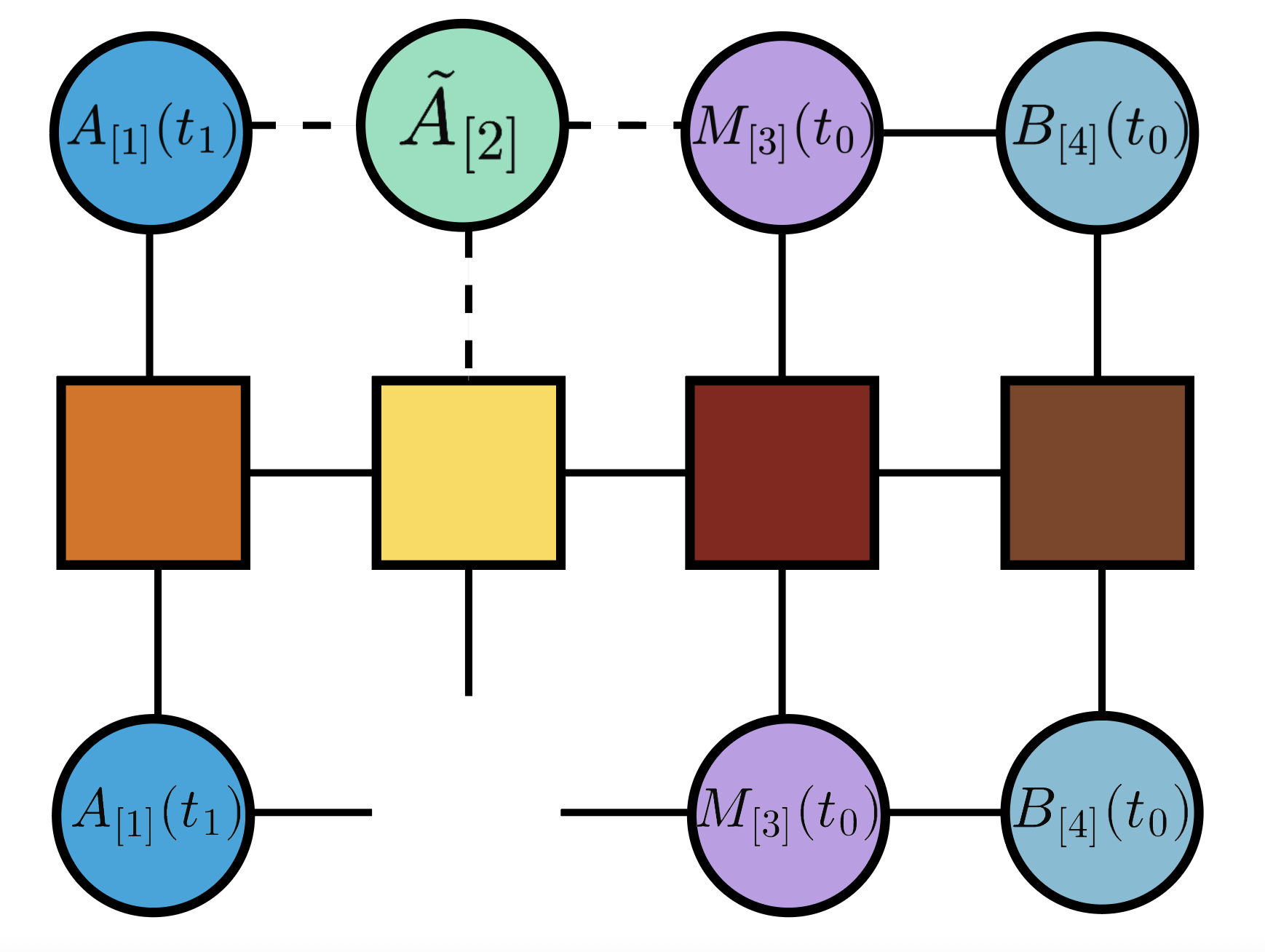}
    \caption{MPS-BUG: application of $\hat{H}^{eff}_2$ to $\widetilde{A}_{[2]}$ (green, top row), giving the slope $\hat{H}^{eff}_2{\widetilde{A}}_{[2]}$ (3 solid legs into empty space, bottom row).}
    \label{fig:h_eff_bug}
\end{figure}
In a corresponding way, sites to the right of the orthogonality center are updated by solving the site-local differential equation: 
\begin{align}\label{eq_bug_tt_right}
    &\dot{\widetilde{B}}_{[m]} = -i\hat{H}_m^{eff}\cdot \widetilde{B}_{[m]}, \quad 
    \hat{H}_m^{eff} = L_{m - 1}(t_0)\cdot W_{[m]}\cdot R_{m + 1}(t_1), \quad m \in [j_c + 1, N],
\end{align}
where $\widetilde{B}_{[m]} \in\mathbb{C}^{b_{m-1}\times d_m\times 2b_m}$ (except for $m=N$ where the third mode always has dimension 1). 

Before these site-local equation can be solved, we note two differences from the TDVP algorithm. First, the bond dimensions of the initial conditions for $\widetilde{A}_{[k]}(t_0)$ and $\widetilde{B}_{[m]}(t_0)$ are different from the bond dimensions of $A_{[k]}(t_0)$ and $B_{[m]}(t_0)$, respectively. Secondly, the bond dimension of $\widetilde{A}_{[k]}(t_1)$ is different from ${A}_{[k]}(t_1)$, and the bond dimension of $\widetilde{B}_{[m]}(t_1)$ is different from ${B}_{[m]}(t_1)$. The algorithms for accurately consolidating the tensor dimensions is beyond the scope of this presentation, but will be discussed in detail in a forthcoming paper by Sulz~\cite{Sulz-2026}.

Once $A_{[k]}(t_1)$ and $B_{[m]}(t_1)$ have been calculated for all $k$ and $m$, the site at the orthogonality center is updated by solving a third site-local Schr\"odinger equation: 
\begin{align}\label{eq_bug-tt-center}
    \dot{\widetilde{M}}_{[j_c]} &= -i\hat{H}^{eff}_{j_c} \cdot \widetilde{M}_{[j_c]},\quad \hat{H}_{j_c}^{eff} = L_{{j_c} - 1}(t_1) \cdot W_{[j_c]} \cdot R_{j_c + 1}(t_1).
\end{align}

After all sites of the tensor train have been updated, the state at time $t_1$ is represented in the form~\eqref{eq_psi-t1}-\eqref{eq_psi-t1-sizes}, where all bond dimension have been doubled. To limit the storage requirements while maintaining the prescribed solution accuracy, a truncated singular value decomposition is then performed on each core, in which all singular values smaller than $\epsilon$ are set to zero, thus allowing the bond dimensions to be reduced accordingly.

\section{Numerical Examples}\label{sec_num-exp}

In this section we numerically evaluate the computational performance for solving Schr\"odinger's equation using the Tucker tensor and tensor train (MPS) decompositions, and compare them to the conventional matrix-vector approach.  
In the matrix-vector formulation, Schr\"odinger's equation can be solved by matrix exponentiation when the Hamiltonian is time-independent and the dimension of the state vector is sufficiently small. For time-dependent Hamiltonians, Schr\"odinger's equation must be solved numerically; here we use the C++ code Quandary~\cite{QuandaryUser} that implements the approach in Section~\ref{sec_mat-vec}. The tensor train and Tucker tensor algorithms from Sections~\ref{sec_tensor-train} and~\ref{sec_bug} are implemented in the Julia language, based on the iTensor package~\cite{iTensor}. All calculations are performed on a MacBook Pro with an Apple M4 chip.

To evaluate the effectiveness of the TDVP-2 and the rank-adaptive BUG methods, we apply the algorithms to both a time-independent and a time-dependent Hamiltonian model. In the time-independent case, we study the transverse Ising model. In the time-dependent case, we consider a simplified model of a quantum device subject to time-dependent control pulses.

\subsection{Time-independent Hamiltonians}

We start by considering the one-dimensional transverse-field Ising model for $N$ qubits, with Hamiltonian 
\begin{equation}
    H = -J\sum_{j=1}^{N-1} S^z_j S^z_{j+1} - g \sum_{j=1}^{N} S^x_j,
    \label{eq:constant-hamiltonian}
\end{equation}
which has a matrix product operator (MPO) representation with bond dimension $3$, see~\cite{10.21468/SciPostPhys.17.5.135}. The operators $S_j^z$ and $S_j^x$ are defined by,
\begin{align*}
    S_j^z &= I^{\otimes (j-1)} \otimes S^z \otimes I^{\otimes (N-j)},\\
    S_j^x &= I^{\otimes (j-1)} \otimes S^x \otimes I^{\otimes (N-j)},
\end{align*}
where
\[
    S^z =
    \tfrac{1}{2}\begin{bmatrix}
        1 & 0 \\
        0 & -1
    \end{bmatrix},
    \qquad
    S^x =
    \tfrac{1}{2}\begin{bmatrix}
        0 & 1 \\
        1 & 0
    \end{bmatrix}.
\]

Unless otherwise noted, all numerical simulations in this section use a random product state as the initial state $\psi(0)$, the error in the states is measured in $L_2$ norm at the final time $T=10.0$, and the exact state is obtained via matrix exponentiation. Without truncation of the SVD ($\epsilon=0$), our implementations of the TDVP-2, Tucker-BUG, and MPS-BUG algorithms have all been verified to be second order accurate in the time-step.

A major advantage of tensor decompositions is that adding sub-systems (qubits) to the composite quantum system does not necessarily lead to an exponentially increasing computational cost. This is demonstrated in Figure~\ref{fig:method_convergence}, which shows run times and bond dimensions for evolving the Ising model to $T=5.0$, for increasing numbers of sub-systems ($N$). 
\begin{figure}
    \centering
    \includegraphics[width=0.49\linewidth]{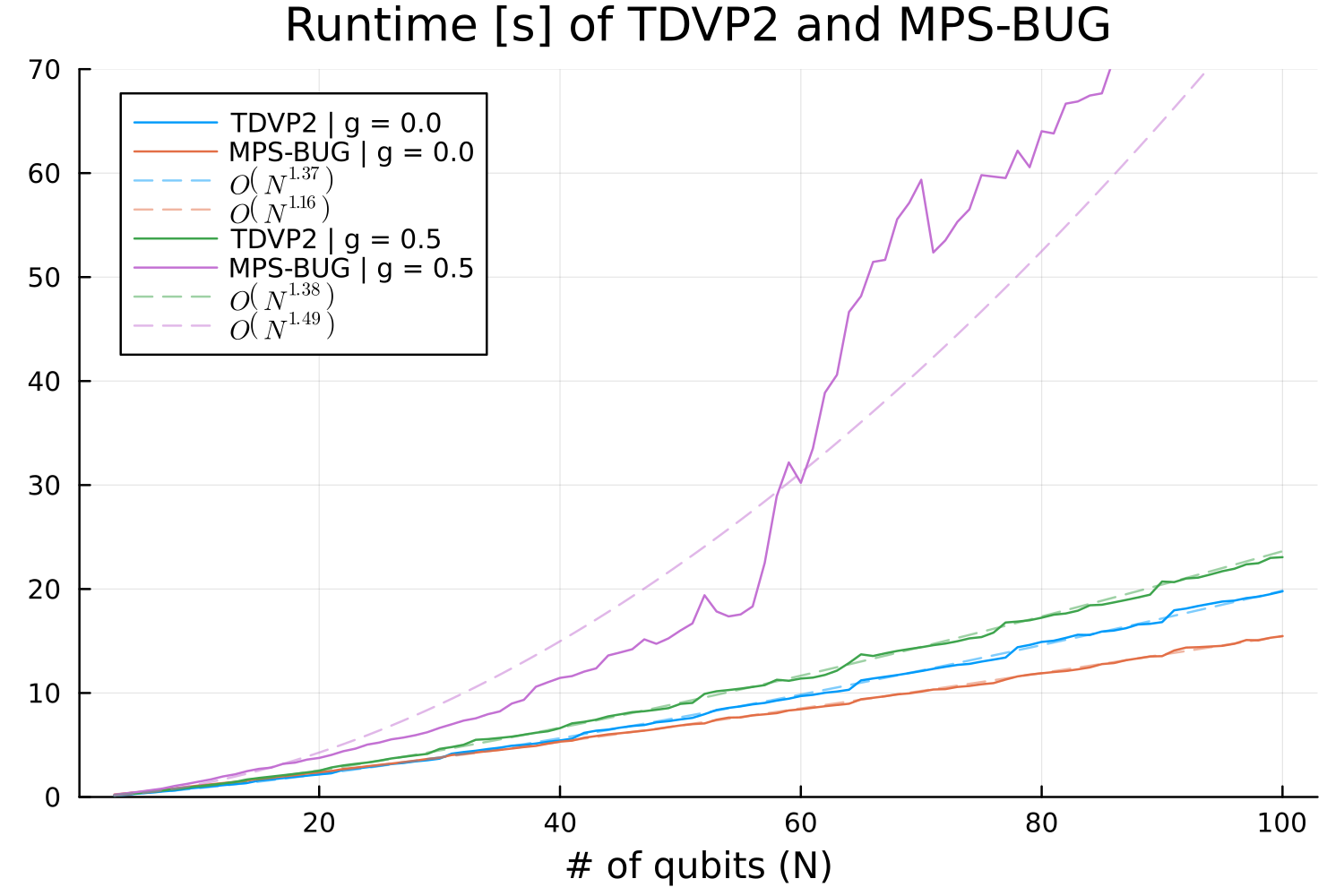}
    \hspace{1mm}
    \includegraphics[width=0.49\linewidth]{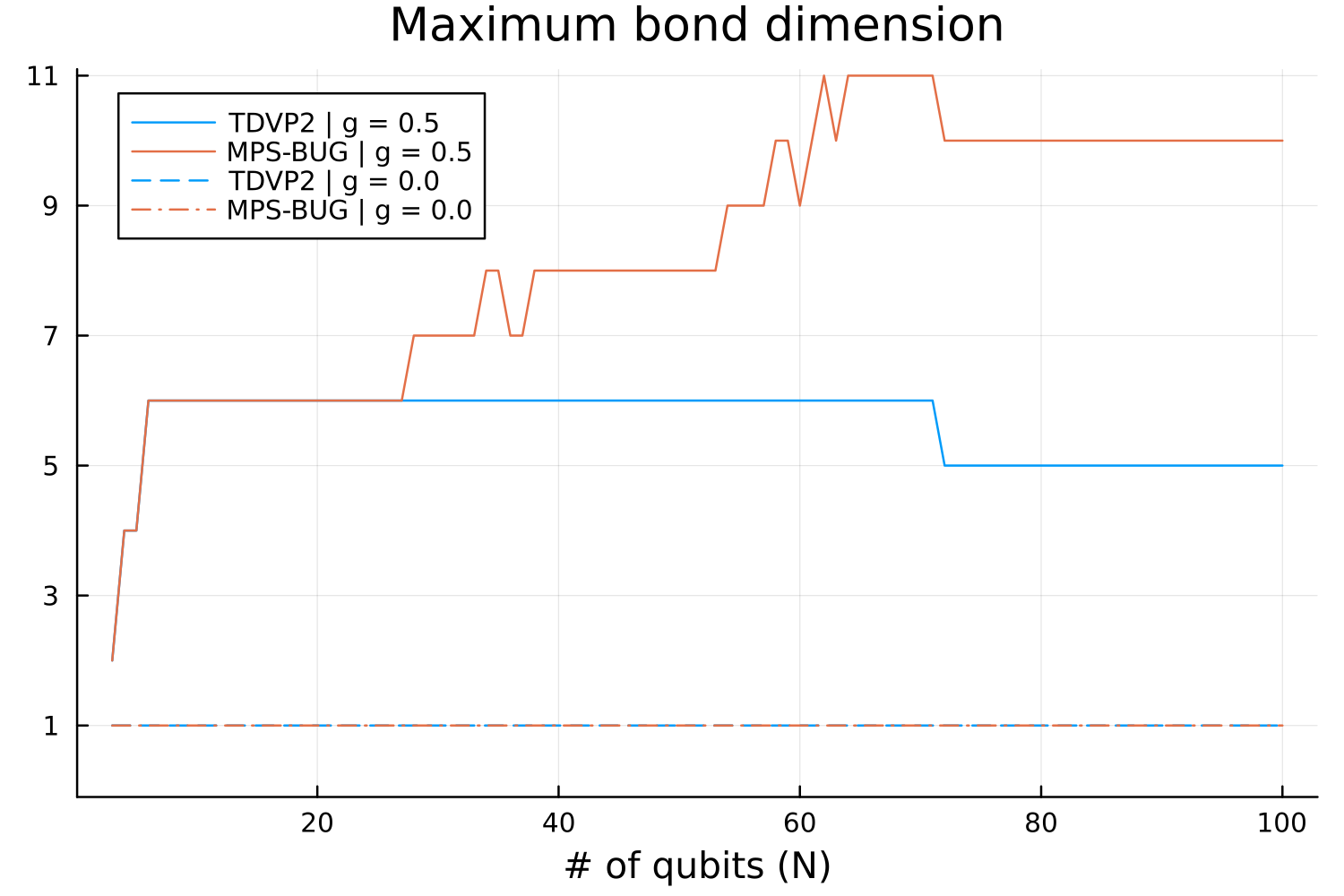}
    \caption{Transverse Ising model with $J=1$ and $g=\{0, 0.5\}$, starting from the ground state, integrated with TDVP-2 and MPS-BUG to time $T=5$. Left: run times as functions of the number of qubits. Right: Maximum bond dimensions.}
    \label{fig:method_convergence} 
\end{figure}
Here, we compare TDVP-2 and MPS-BUG, with and without transverse field terms. Without transverse field ($g=0$), the Hamiltonian is diagonal in the computational basis, and entanglement can only be generated through phase accumulation, and only if the initial state is in a superposition state. However, these simulations start from the ground state, and the solution therefore evolves as a product state. As a result, the bond dimensions are $b=1$ throughout the simulation, explaining the basically linear growth in run time as $N$ increases, for both methods.  
In the case with transverse field terms, $g=0.5$, superpositions are generated for any initial condition, leading to increasing entanglement by the coupling terms, $S^z_j S^z_{j+1}$. In this case, the maximum bond dimensions for TDVP-2 increases initially, but then plateaus at $b=6$ for most $N$. This behavior leads to an almost linear growth in run time for TDVP-2. For MPS-BUG, it's maximum bond dimension is identical to TDVP-2 for $N\leq 26$. For larger numbers of qubits, the bond dimension for MPS-BUG keeps increasing with $N$, leading to longer run times than TDVP-2. The reason for the increase in bond dimension is currently unclear.

Figure~\ref{fig:cutoff_plot_tdvp} (left panel) shows the state error at final time $T=10.0$, as function of the SVD threshold $\epsilon$, for Tucker-BUG, TDVP-2, and MPS-BUG, for a transverse Ising model with $N=10$ qubits. Here each color corresponds to a different time-step. Compared to TDVP-2, twice as many time-steps are needed by both BUG methods to get similar solution errors. This is because TDVP-2 takes two sub-steps within each full time-step.
\begin{figure}
    \centering
    \includegraphics[width=0.49\linewidth]{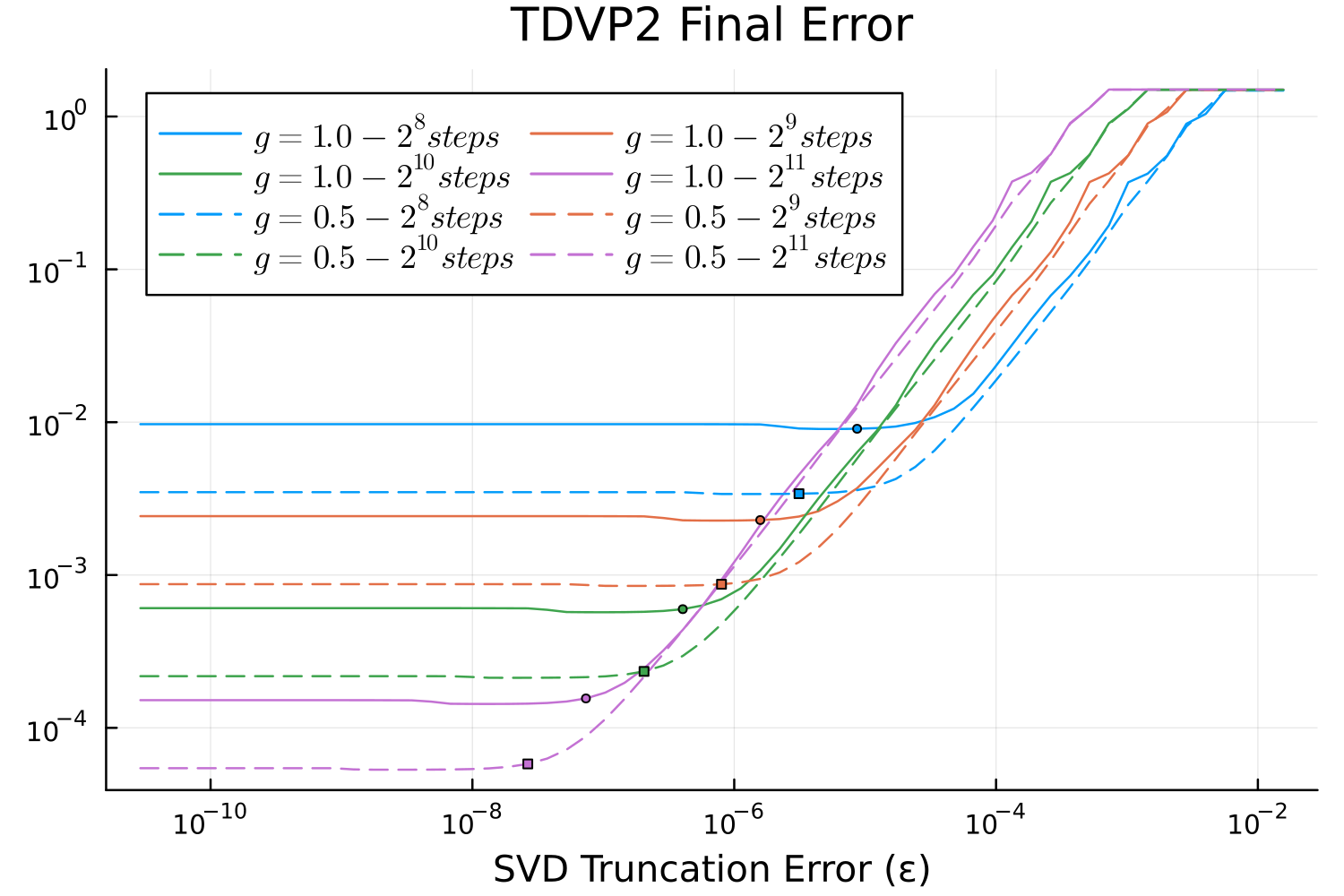}
    \includegraphics[width=0.49\linewidth]{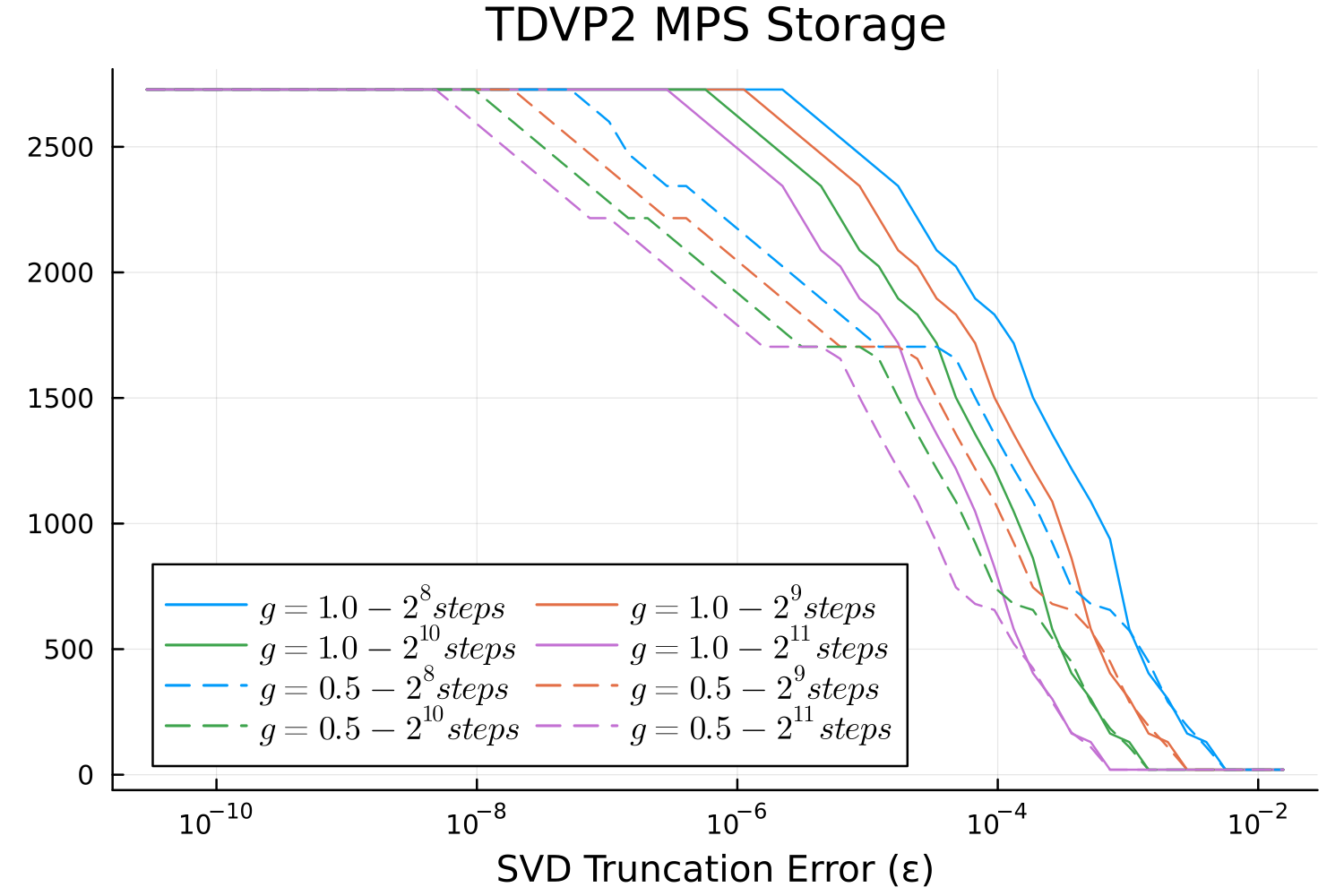}\\
    \vspace{2mm}
    \includegraphics[width=0.49\linewidth]{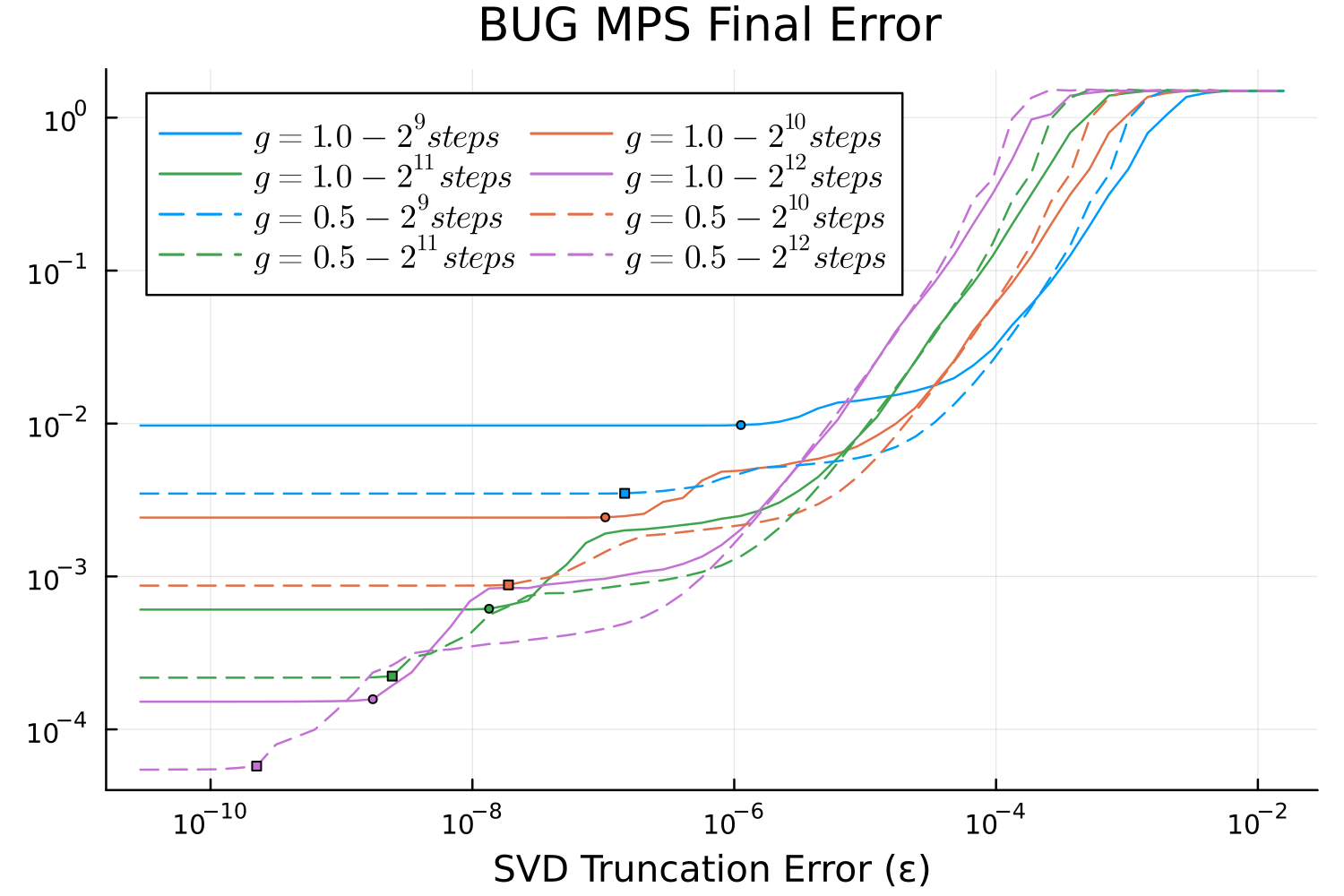}
    \includegraphics[width=0.49\linewidth]{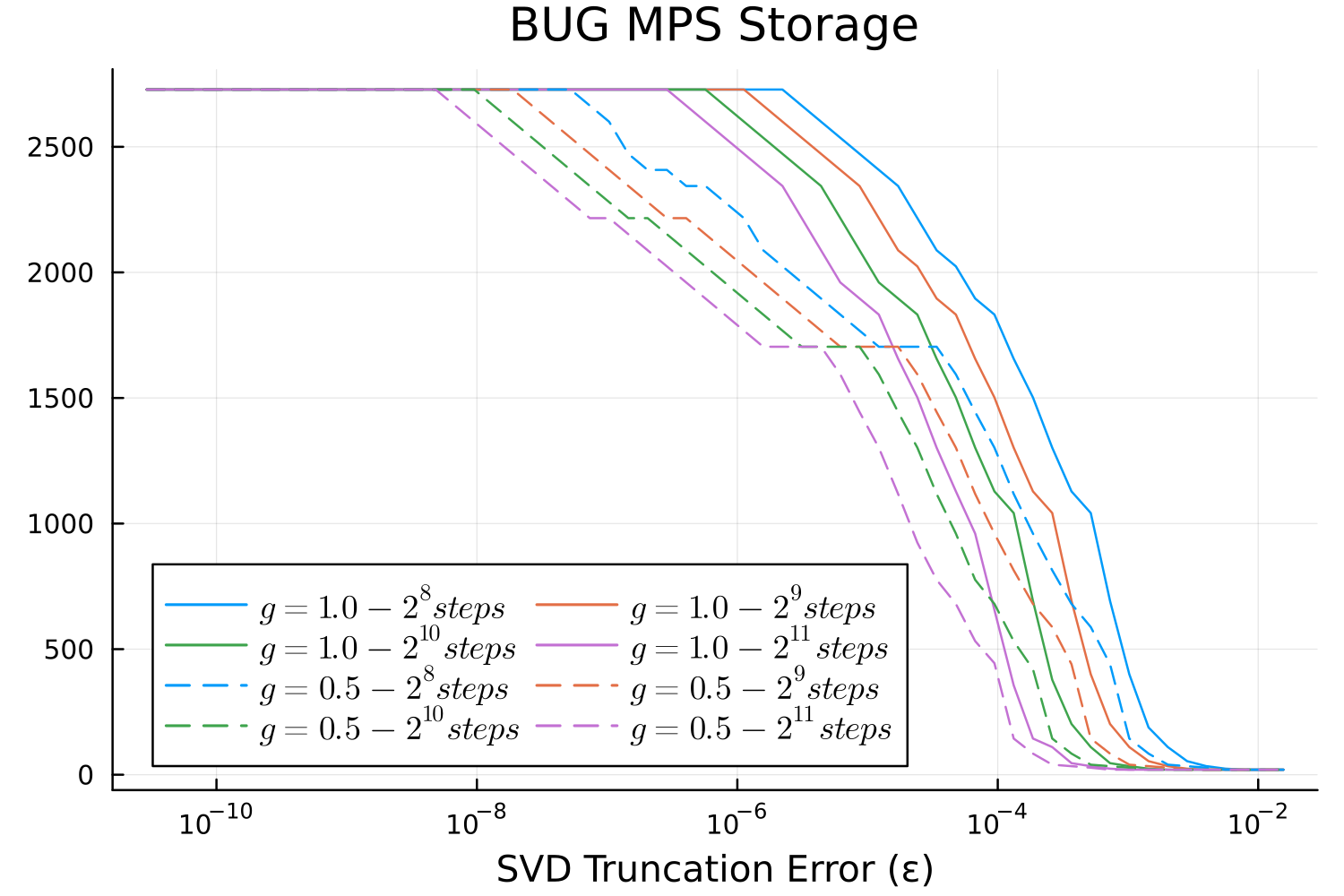}\\
        \vspace{2mm}
    \includegraphics[width=0.49\linewidth]{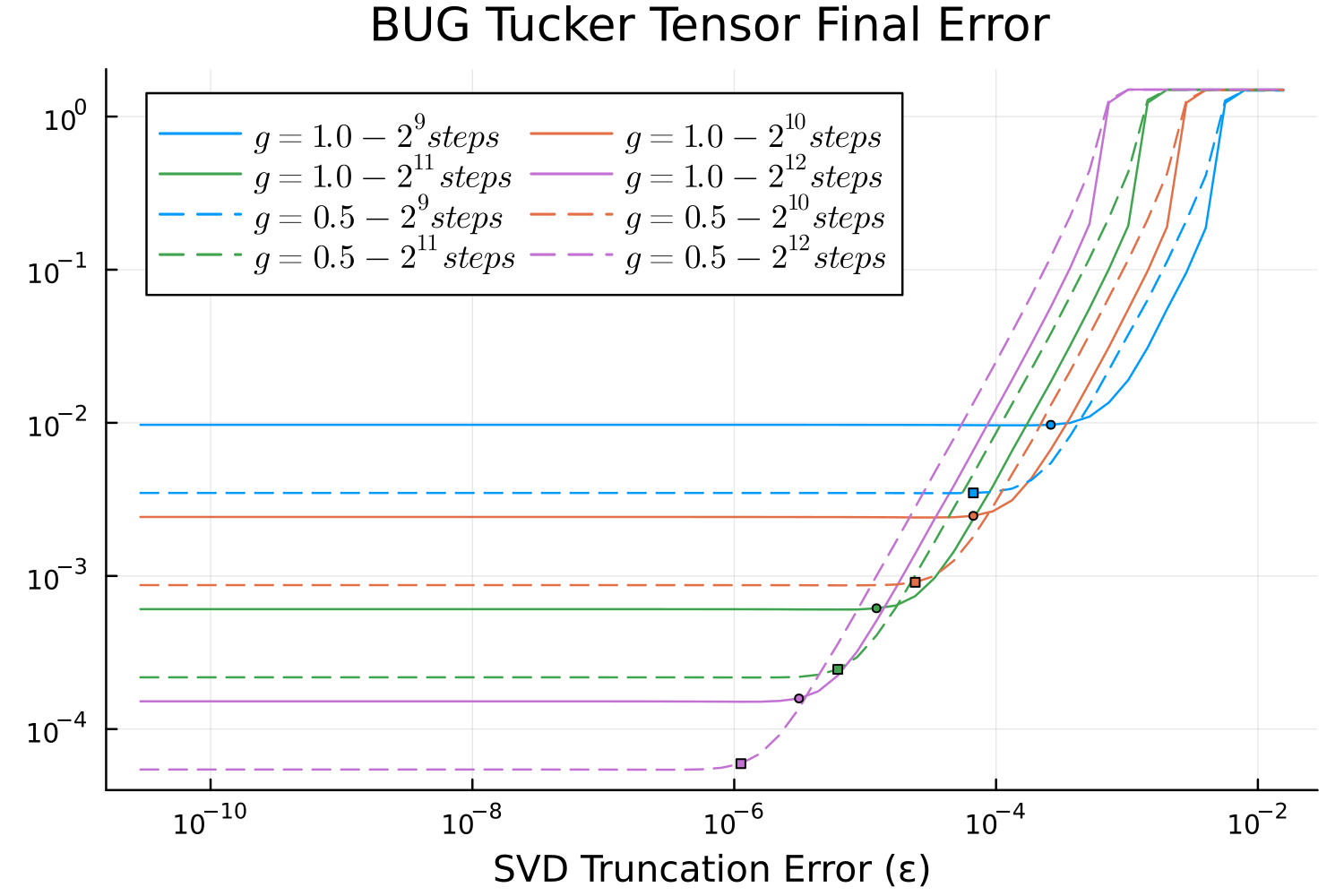}
    \includegraphics[width=0.49\linewidth]{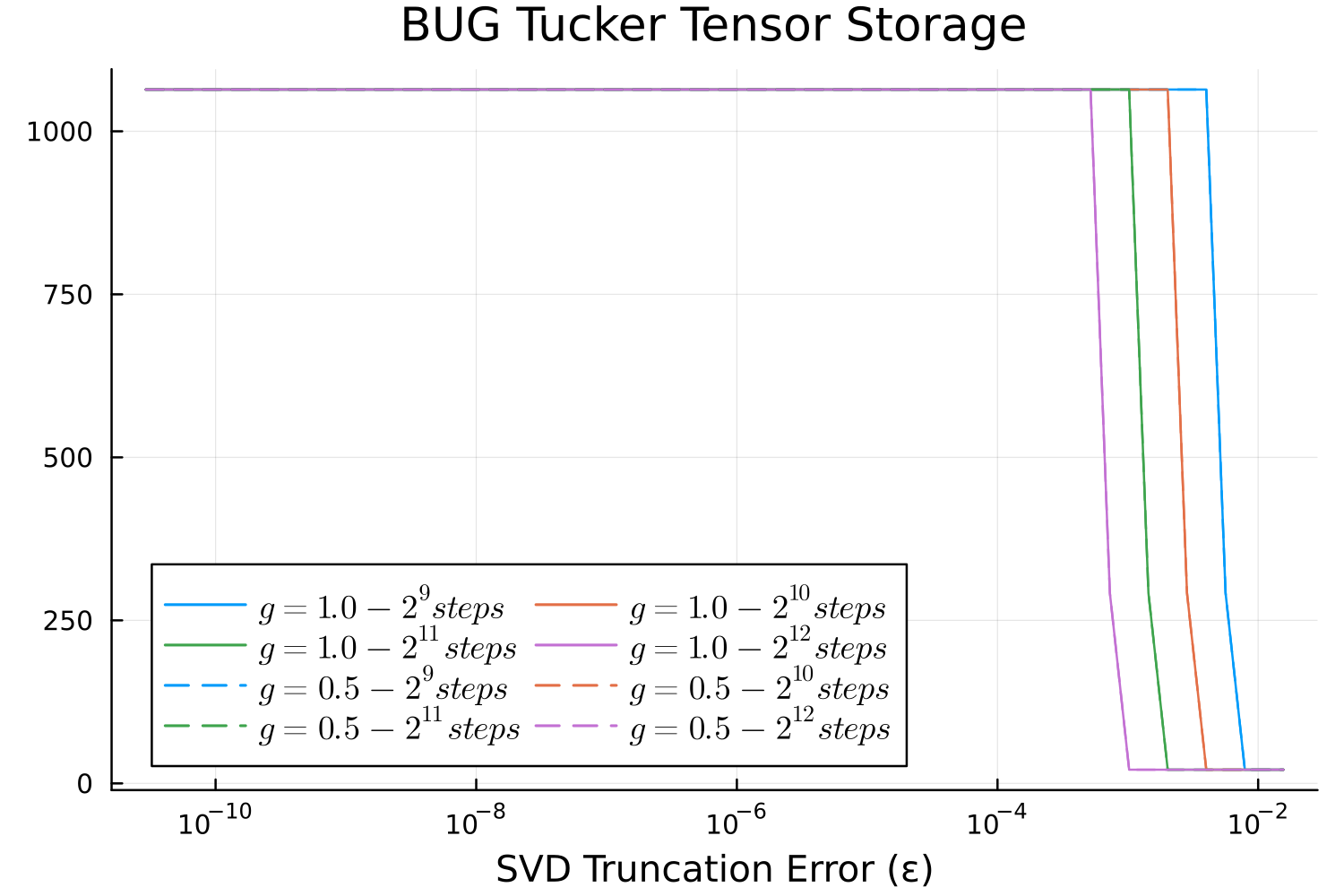}
    \caption{Accuracy and storage requirements as function of the SVD truncation parameter ($\epsilon$) for TDVP-2 (top row), MPS-BUG (middle row), and Tucker-BUG (bottom row). Here, $J=1$ and $g=\{0.5, 1.0\}$ in the Ising model with $N=10$ qubits. Colors indicate different numbers of time-steps, solid lines correspond to $g=1.0$ and dashed lines to $g=0.5$. Left column: State error at final time $T=10.0$. Right column: Maximum number of entries stored for each tensor decomposition.}
    \label{fig:cutoff_plot_tdvp} \label{fig:cutoff_plot_bug} \label{fig:cutoff_plot_tt-bug}
\end{figure}
As the number of time-steps increases, the SVD truncation error accumulates more rapidly. Two regimes can be identified. For sufficiently small $\epsilon$, the accuracy is always dominated by the ${\cal O}(\delta^2)$ time-stepping error. 
For each number of time-steps, the colored squares and circles indicate the largest $\epsilon=\epsilon_{\textrm{trunc}}$ where the solution error is dominated by the ${\cal O}(\delta^2)$ time-stepping error. Hence, using $\epsilon<\epsilon_{\textrm{trunc}}$ only increases the computational effort, without improving the accuracy in the solution. For fixed values of $\epsilon>\epsilon_{\text{trunc}}$, the error can increase when the number of time-steps increases ($\delta$ decreases). From these results we conclude that $\epsilon$ must decrease with $\delta$ to achieve a second order accurate solution. For MPS-BUG, we also note an intermediate range for $\epsilon>\epsilon_\textrm{trunc}$, where the solution is approximately first order accurate.

The maximum number of entries stored in the tensor decompositions during the time-evolution of the Ising model are shown in Figure~\ref{fig:cutoff_plot_tdvp} (right panel). Two trends emerge. For a fixed SVD threshold, increasing the number of time-steps reduces the bond dimensions due to more frequent SVD truncations, leading to reduced storage. However, for a fixed number of time-steps, the storage requirements increase when the SVD threshold is reduced. This highlights the trade-off between accuracy and memory usage inherent to tensor-decomposition methods.

We now analyze the same experiment using the Tucker-BUG integrator, see Figure~\ref{fig:cutoff_plot_bug} (bottom row). As before, when the SVD threshold is sufficiently small, the state error is dominated by the time-stepping error and converges to the same level as for TDVP-2 and MPS-BUG. We also note that the error in the Tucker-BUG method remains small for a larger $\epsilon$ compared to TDVP-2 and MPS-BUG. However, the Tucker-BUG method does not show the same gradual reduction in storage as TDVP-2 and MPS-BUG. This behavior reflects the binary all-or-nothing truncation options for Tucker tensors when applied to two-level (qubit) sub-systems.

Tensor decomposition methods can be used to study the dynamics of observables, e.g., the magnetization at each site,
\begin{equation}\label{eq:magnetization_formula}
    m_j = \langle \psi|M^j|\psi\rangle,\quad j\in[1,N],
\end{equation}
where 
\begin{equation*}
    M^j = 2 I^{\otimes j - 1}\otimes S^z \otimes I^{\otimes N - j}. 
\end{equation*}
In Figure \ref{fig:magnetization-larger-system}, we consider the magnetization in an Ising model with $N=20$ qubits that is evolved to final time $T=15$. The initial data has magnetization $+1$ at site $N$, and $-1$ at all other sites. Panels (a)-(c) show the magnetization using TDVP-2 for different values of the SVD threshold $\epsilon$. By visual inspection, the magnetization does not change between $\epsilon=1.36\cdot 10^{-5}$ (panel (a)) and $\epsilon=2.28\cdot 10^{-4}$ (panel (b)). However, the largest value of $\epsilon= 8.89\cdot 10^{-3}$ (panel (c)) give a very different result, indicating that the SVD truncation lead to a significant error.
\begin{figure}
    \centering
    \begin{subfigure}{0.49\linewidth}
        \centering
        \includegraphics[width=\linewidth]{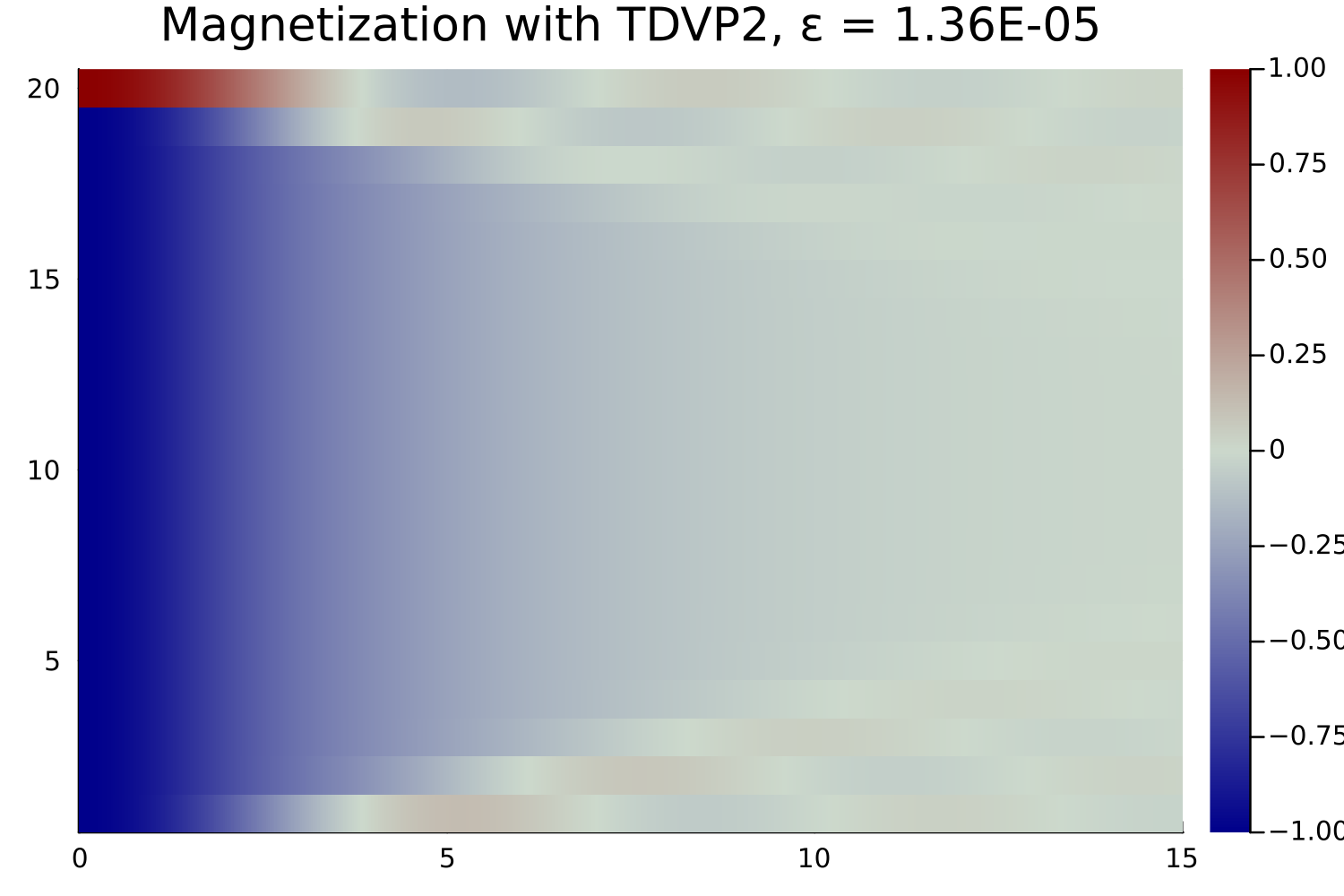}
        \caption{TDVP-2, $\epsilon_1$}
    \end{subfigure}
    \hspace{1mm}
    \begin{subfigure}{0.49\linewidth}
        \centering
        \includegraphics[width=\linewidth]{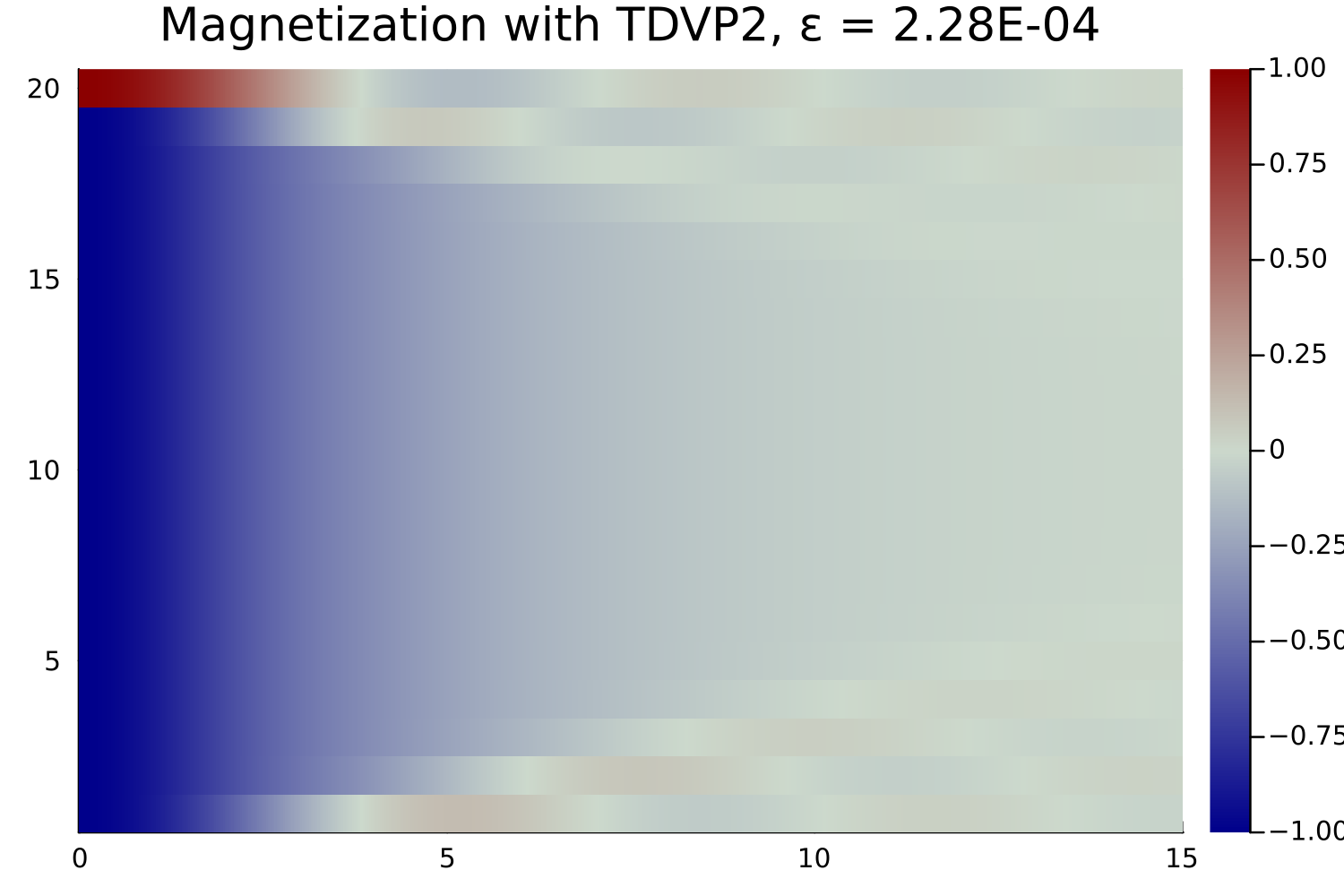}
        \caption{TDVP-2, $\epsilon_2$}
    \end{subfigure}\\
    \vspace{3mm}    
    \begin{subfigure}{0.49\linewidth}
        \centering
        \includegraphics[width=\linewidth]{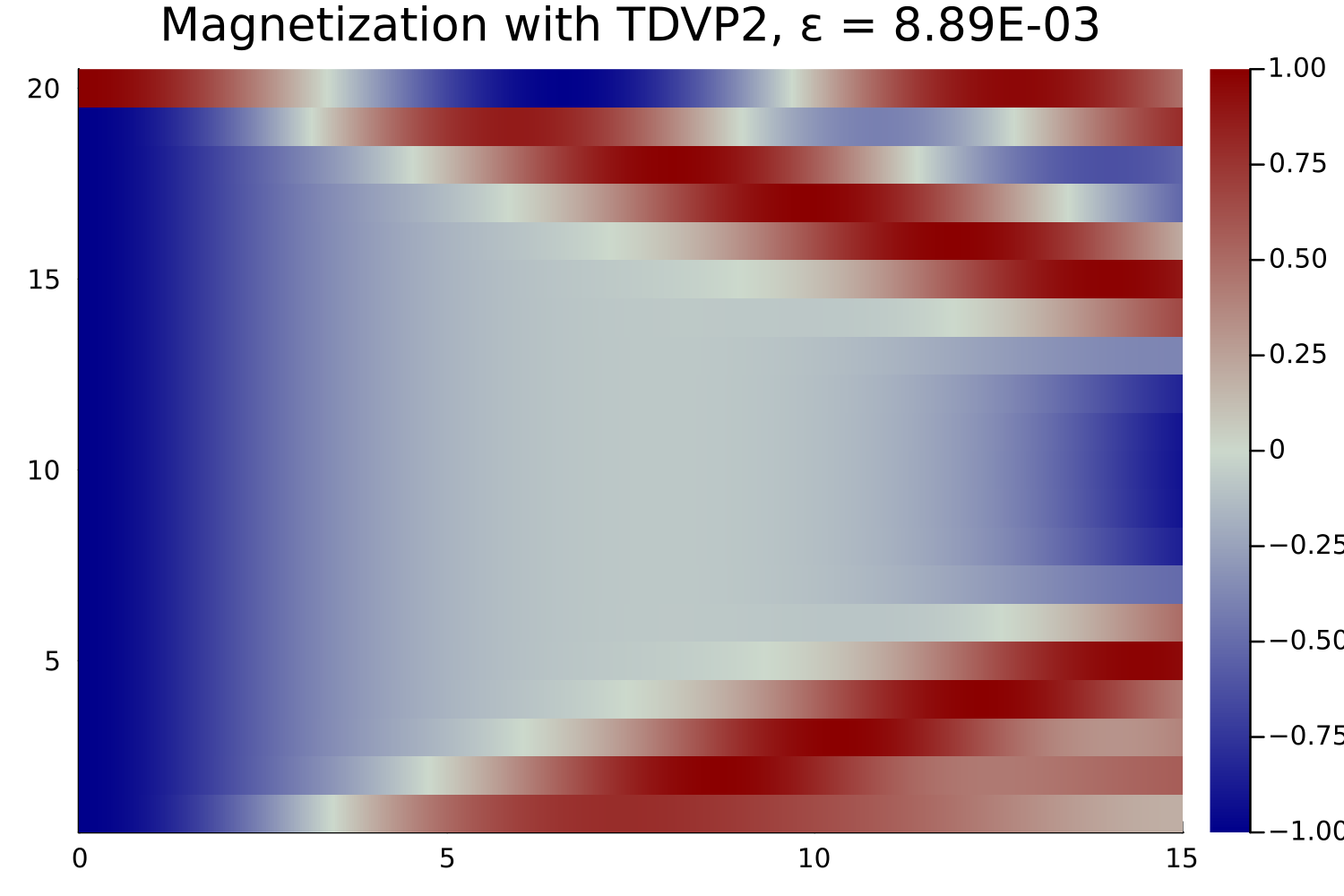}
        \caption{TDVP-2, $\epsilon_3$}
    \end{subfigure}
    \begin{subfigure}{0.49\linewidth}
        \centering
        \includegraphics[width=\linewidth]{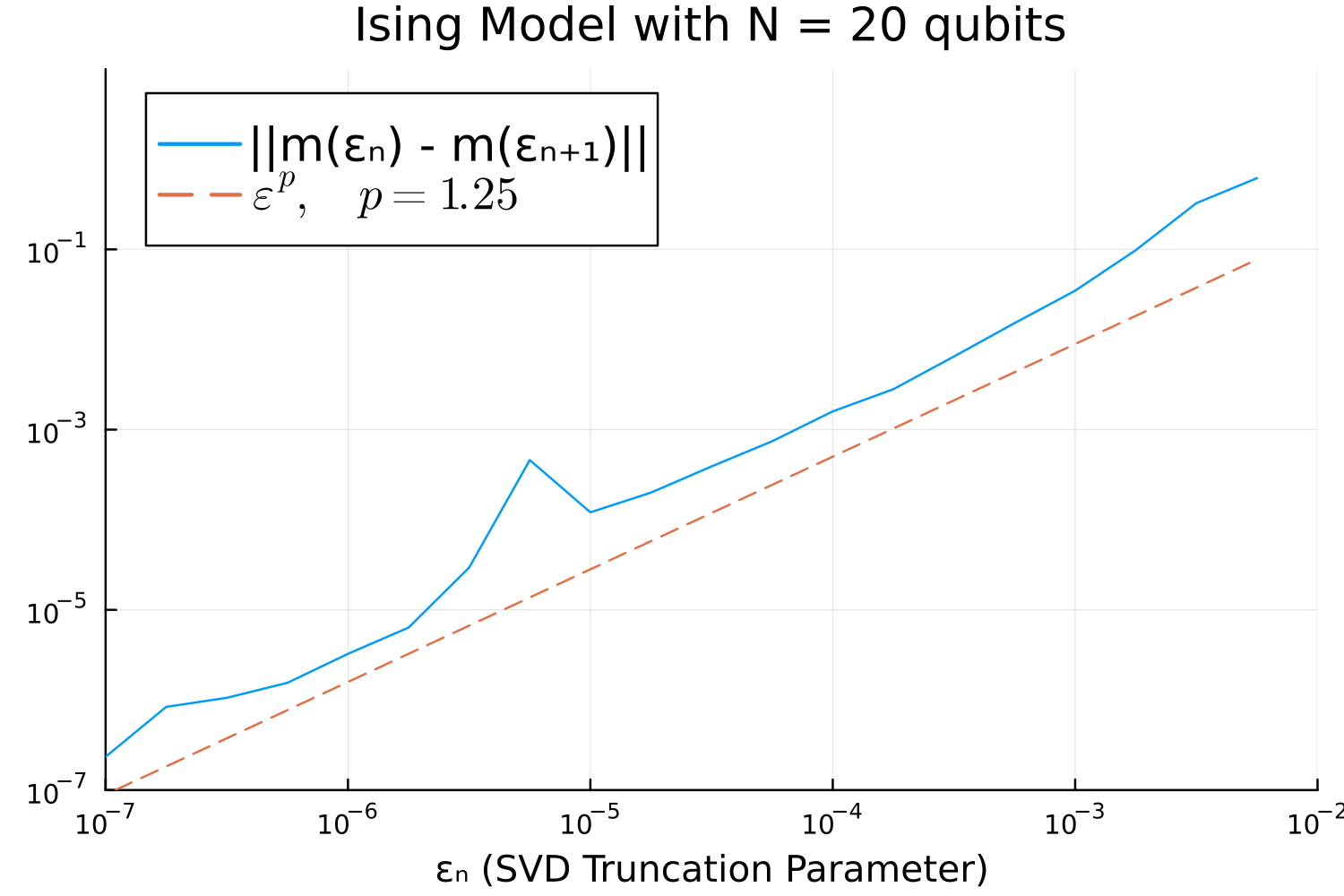}
        \caption{Magnetization difference at $t = 5.0$.}
    \end{subfigure}
    \caption{The magnetization observable in a composite system with $N=20$ qubits evolving under the Ising Hamiltonian with $J = 1.0$ and $g = 0.5$, for different SVD $\epsilon$-thresholds. The initial state has magnetization $m_j=-1$ for $j\in[1,19]$ and $m_{20}=+1$, and the system is evolved to final time T = 15.0 using $256$ time-steps. In the bottom-right panel, we evaluate the magnetization vector at time $t=5.0$ for $\epsilon_k = 10^{-3}\cdot 10^{-k/4}$ for $k\in[1,20]$, and report the norm of the difference $\|\vec{m}_k - \vec{m}_{k-1}\|$ as function of $\epsilon_k$.}
    \label{fig:magnetization-larger-system}
\end{figure}
In this case, the state has $N_{tot}\approx 10^6$ elements, making it intractable to calculate an exact solution via matrix exponentiation on the available hardware. To illustrate how the $\epsilon$-threshold in the SVD truncation influences the magnetization, we evaluate the magnetization vector $\vec{m} = [m_1,\ldots m_N]$ at time $t=5$ as function of $\epsilon$. Starting with $\epsilon_0=10^{-3}$ we define $\epsilon_{k} = \epsilon_{k-1} \cdot 10^{-1/4}$, for $k\in[1,20]$. This gives $\epsilon_k  = \epsilon_0 \cdot 10^{-k/4}$ and $\epsilon_{20} = 10^{-8}$. We denote the final magnetization, calculated with $\epsilon=\epsilon_k$, by $\vec{m}_k$ and then study the norm of differences in magnetization, $\Delta_k := \| \vec{m}_{k} - \vec{m}_{k-1} \|$, for $k=1,2,\ldots,20$, see Figure~\ref{fig:magnetization-larger-system}(d). The slope is $\approx 1.25$ indicating that the error in magnetization satisfies $\vec{m}_k - \vec{m}_{\text{exact}} = {\cal O}(\epsilon_k^p)$, with $p=1.25$.

\subsection{Time-Dependent Hamiltonians}

Time-dependent Hamiltonians are common in quantum control applications, where they are used to model quantum dynamics due to external control pulses applied to the system. In the following we consider the Hamiltonian model for a linear chain of $N$ coupled superconducting transmon qubits stated in~\eqref{eq:td-hamiltonian}-\eqref{eq:control-hamiltonian}. The dimension of the Hilbert (vector) space of each sub-system is $d_k=2$, resulting in a state vector $|\psi\rangle \in \mathbb{C}^{2^N}$. The MPO representation of this Hamiltonian has bond dimension equal to $4$, for all sites. In the examples below, we take the frequency of rotation, $\omega^d$, to be the arithmetic mean of the transition frequencies $\{\omega_k\}_{k=1}^N$. The time-dependence of the control functions $p^k(t)$ and $q^k(t)$ is assumed to be a sum of harmonic carrier waves with frequency $\Omega_k$, where the envelope function is parameterized by B-spline wavelets. 

As a test case, we consider the state-to-state problem where the state of each qubit is transformed from the $N$-qubit ground state into the superposition state,
\begin{align}\label{eq_state-2-state}
    |0\rangle^{\otimes N}
    \;\longrightarrow\;
    \left(\tfrac{|0\rangle + |1\rangle}{\sqrt{2}}\right)^{\otimes N}.
\end{align}
Due to the inherent coupling between the qubits, it would be challenging to analytically determine the control functions for the state-to-state transformation. Instead, we numerically determine the control pulses in the control Hamiltonian \eqref{eq:control-hamiltonian} using the Quandary code~\cite{QuandaryUser}; for details of the numerical approach, see~\cite{GunPetDub-2021,BSplines}. In this approach, Quandary simultaneously optimizes for the control pulses in all qubits. For example, in the case of $N=6$, the optimized control pulse for qubit \#1 is shown in Figure~\ref{fig:pulse_6}.
\begin{figure}
    \centering
    \includegraphics[width=0.49\linewidth]{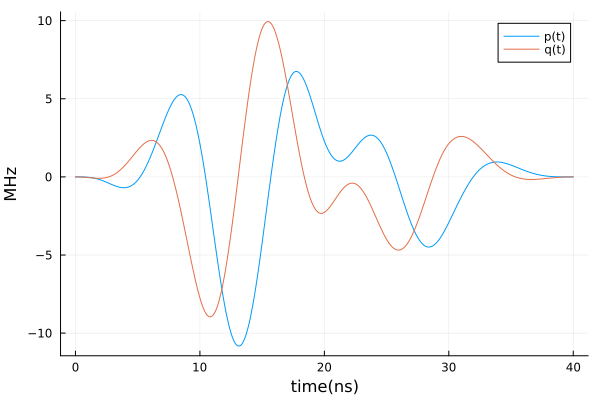}
    \caption{Real and imaginary parts of the control pulse applied to qubit \#1, in the case of $N = 6$ qubits in the system.}
    \label{fig:pulse_6}
\end{figure}

In the following, we test the numerical accuracy of the TDVP-2 and Tucker-BUG algorithms by comparing the state and infidelity with those obtained with Quandary. Due to the time dependence in the control Hamiltonian, the evolution of the quantum system cannot be evaluated analytically. Instead, a reference solution is obtained using Quandary.

\subsection{Coupled systems with optimized control pulses}

We consider the case of short-range interaction in the Hamiltonian and set the coupling strength between qubits $k$ and $\ell$ to be
\[
\frac{J_{k\ell}}{2\pi} = \begin{cases}
    5.0\, \text{MHz},& \ell = k+1,\\
    0.0, & \text{otherwise}.
\end{cases}
\]
Here, we choose the transition frequencies to be $\omega_k/2\pi = 4.64 + 0.06 (k-1)$ GHz, for $k\in[1,N]$. Because of the coupling between qubits, we use three carrier frequencies for each qubit (two for the first and last qubits in the chain). The carrier frequencies are $\Omega_{k} = \{\omega_{k-1}, \omega_k, \omega_{k+1}\} - \omega^d$.

To establish a reference solution, we evolve this system with Quandary to final time $T = 40.0$ and used Richardson extrapolation to estimate the error in the final state vector. For $N=6$ qubits, with 4152 time-steps, the norm of the error was $\approx 2.97\cdot 10^{-5}$. For $N=10$ qubits, with 10,480 time-steps, the norm of the error became $\approx 7.82\cdot 10^{-5}$.

Figure \ref{fig:qft_err-storage-combo} (left column) shows the difference between the states computed with tensor decomposition methods (TDVP-2, Tucker-BUG) and Quandary, as function of the $\epsilon$-threshold in the truncated SVD. We note that the infidelity computed with the TDVP-2 and Tucker-BUG methods is at least one order of magnitude smaller than the norm of the state-differences. Also note that the fidelity obtained by both tensor decompositions agree with Quandary when $\epsilon\leq 10^{-5}$ ($N=6$) and $\epsilon\leq 5\cdot 10^{-6}$ ($N=10$), respectively. Figure \ref{fig:qft_err-storage-combo} (right column) shows the total number of entries stored in the tensor decompositions compared to the vector storage in Quandary. In this figure we note the all-or-nothing truncation property of the Tucker-BUG algorithm as the bond dimension in the factor matrices drops from 2 to 1 at $\epsilon \approx 10^{-4}$. Compared to Quandary, TDVP-2 uses less storage when $\epsilon> 5.5\cdot 10^{-5}$ ($N=6$) and when $\epsilon > 10^{-9}$ ($N=10$), respectively. For the $N=10$ case, we conclude that TDVP-2 needs less storage than Quandary to calculate both the final state and the fidelity with the same accuracy.
\begin{figure}
    \centering
    \includegraphics[width=0.49\linewidth]{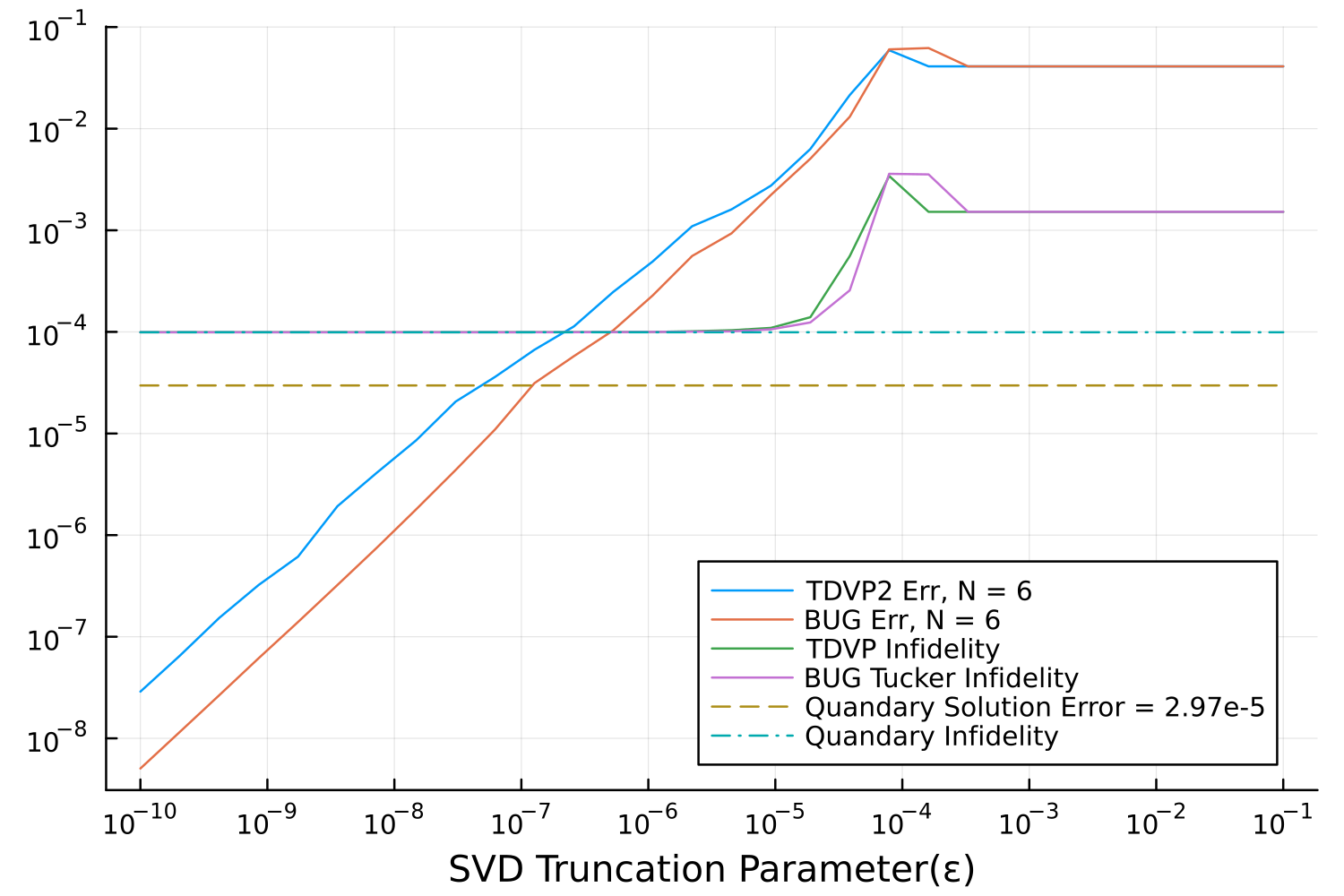}
    \includegraphics[width=0.49\linewidth]{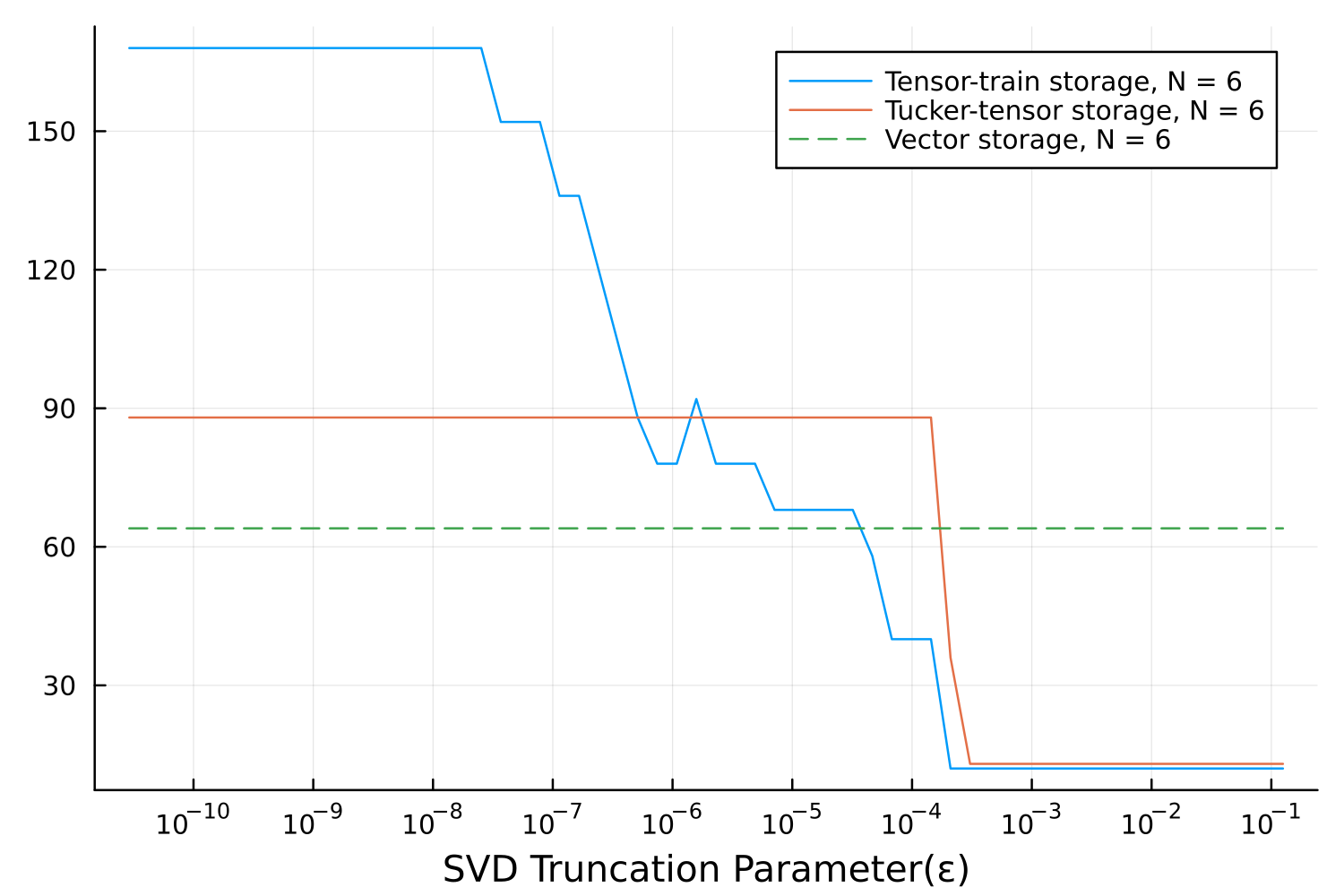}\\
    \includegraphics[width=0.49\linewidth]{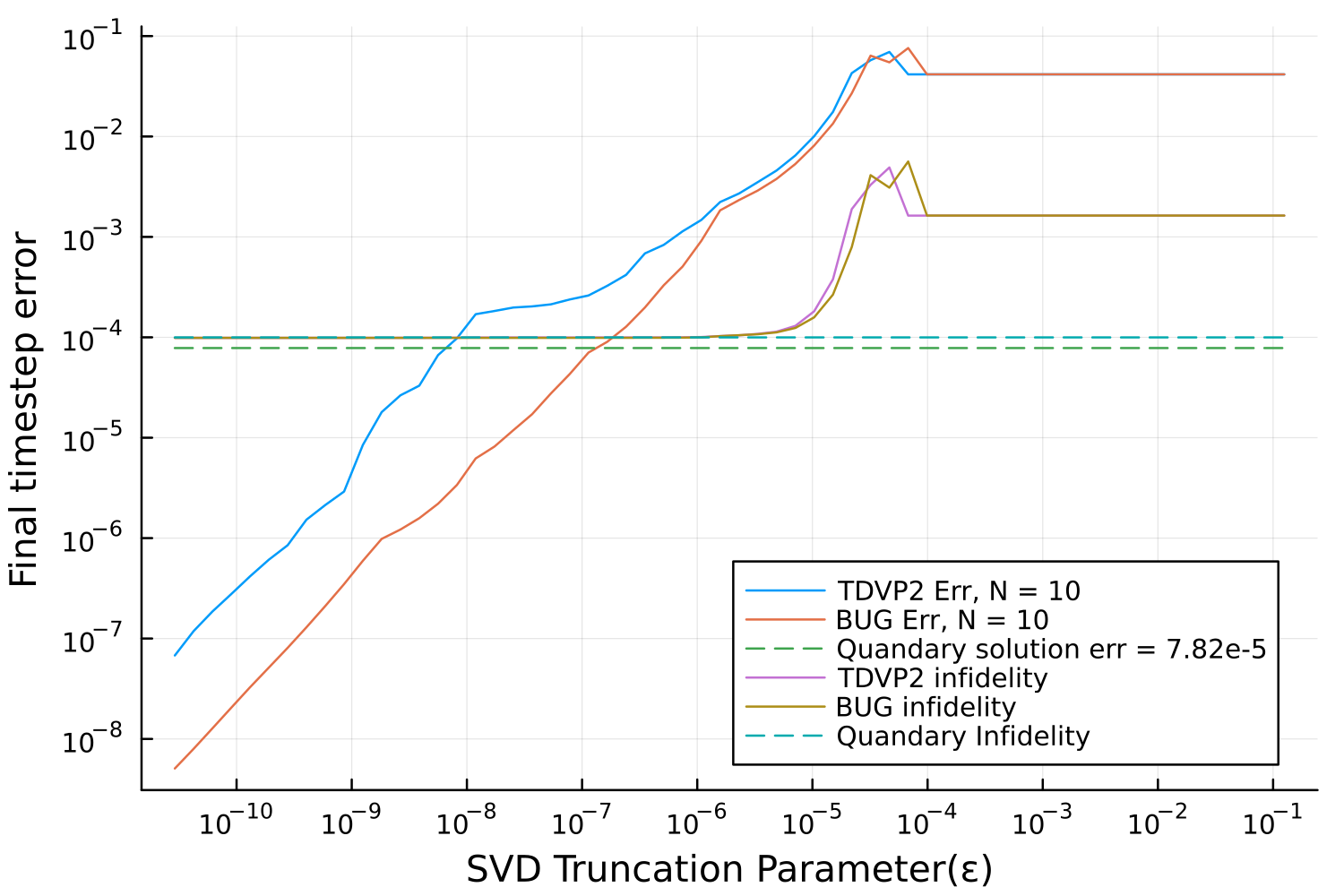}
    \includegraphics[width=0.49\linewidth]{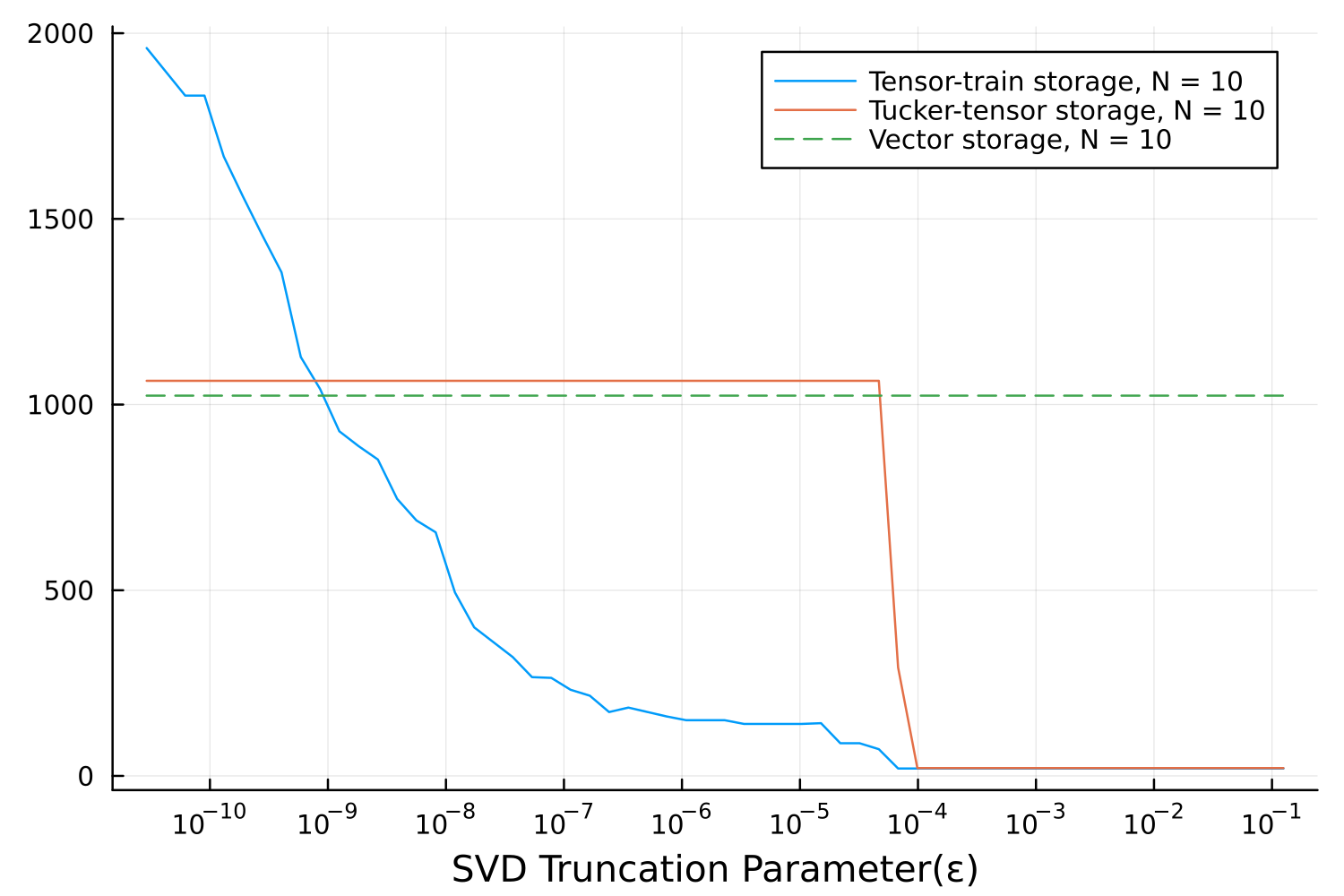}
    \caption{Solution accuracy (left column) and storage requirements (right column) at the final time $T=40$ for systems with $N = 6$ qubits (top row) and $N=10$ qubits (bottom row). Left column shows the norm of the differences between the states from the tensor-decomposition methods and Quandary, as function of the $\epsilon$-threshold in the truncated SVD. Also shown are the infidelities from TDVP-2, Tucker-BUG, and Quandary. Right column shows the total number of entries stored in the MPS and the Tucker-tensor as function of $\epsilon$. The dashed green-line indicates the storage needed for one state vector in Quandary.}
    \label{fig:qft_err-storage-combo}
\end{figure}

\subsection{Larger systems of qubits with optimized control pulses}
Consider a system with $N$ qubits in a linear chain where the transition frequencies oscillate with period 8:
\begin{align*}
    r(k) = (k-1) \bmod 4,\quad
    \omega_k =
    \begin{cases}
        5.18 - 0.06\, r(k),
        & k=1,2,3,4,\,9,10,11,12, \ldots \\
        5.18 + 0.06\, r(k) - 0.15,
        & k=5,6,7,8,\, 13, 14, 15, 16, \ldots
    \end{cases}
\end{align*}
Because it becomes difficult to numerically optimize the control pulses as the number of qubits becomes larger, we here consider the case with $J_{k\ell} =0$. This allows the control pulses for the state-to-state transformation \eqref{eq_state-2-state} to be optimized independently for each qubit. As a result, only 1 carrier frequency is needed per qubit, $\Omega_k = \omega_k - \omega^d$. 

Figure \ref{fig:runtime_scaling} shows a run time comparison between TDVP-2 and Quandary for an increasing number of qubits.
\begin{figure}
    \centering
    \includegraphics[width=0.49\linewidth]{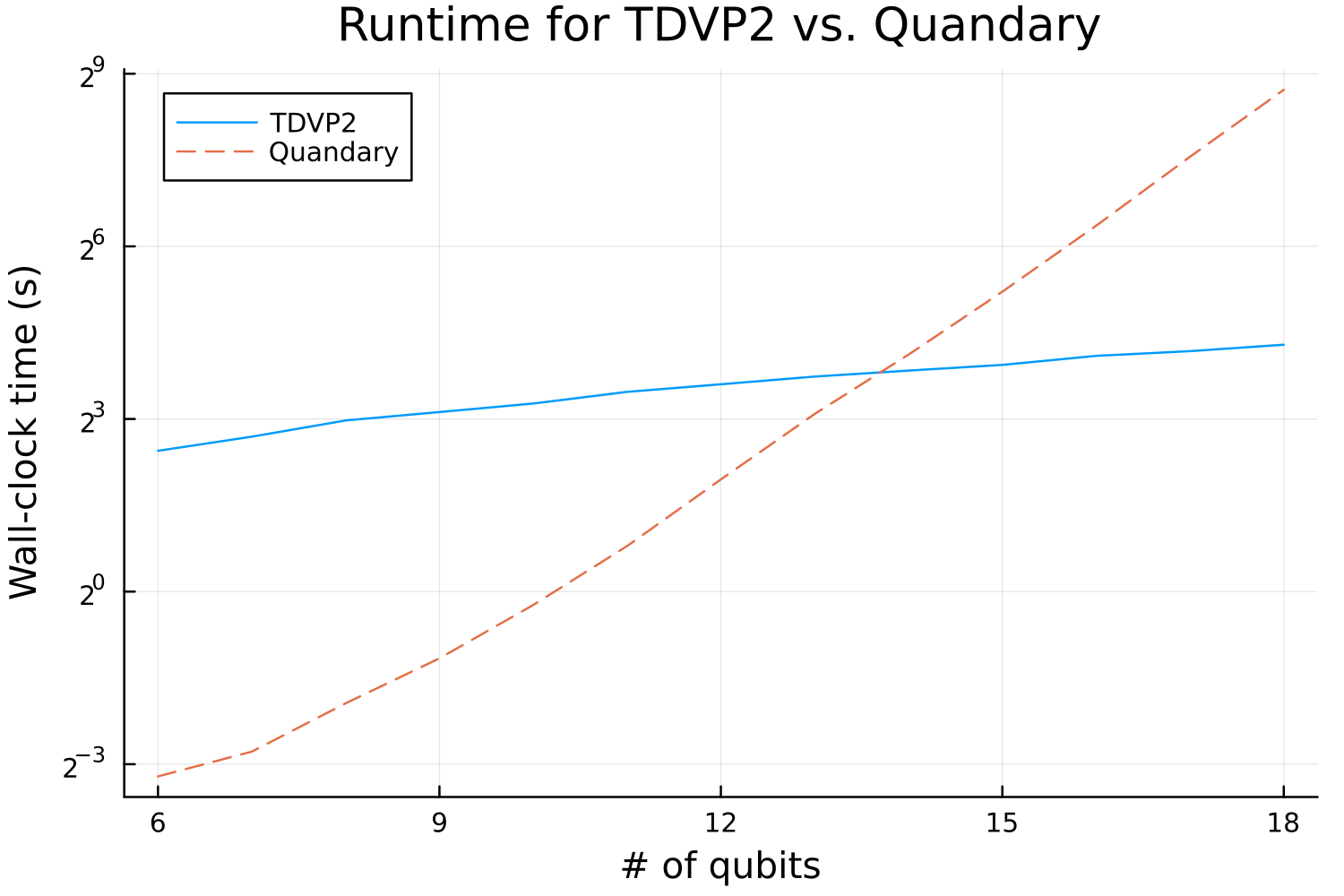}
    \includegraphics[width=0.49\linewidth]{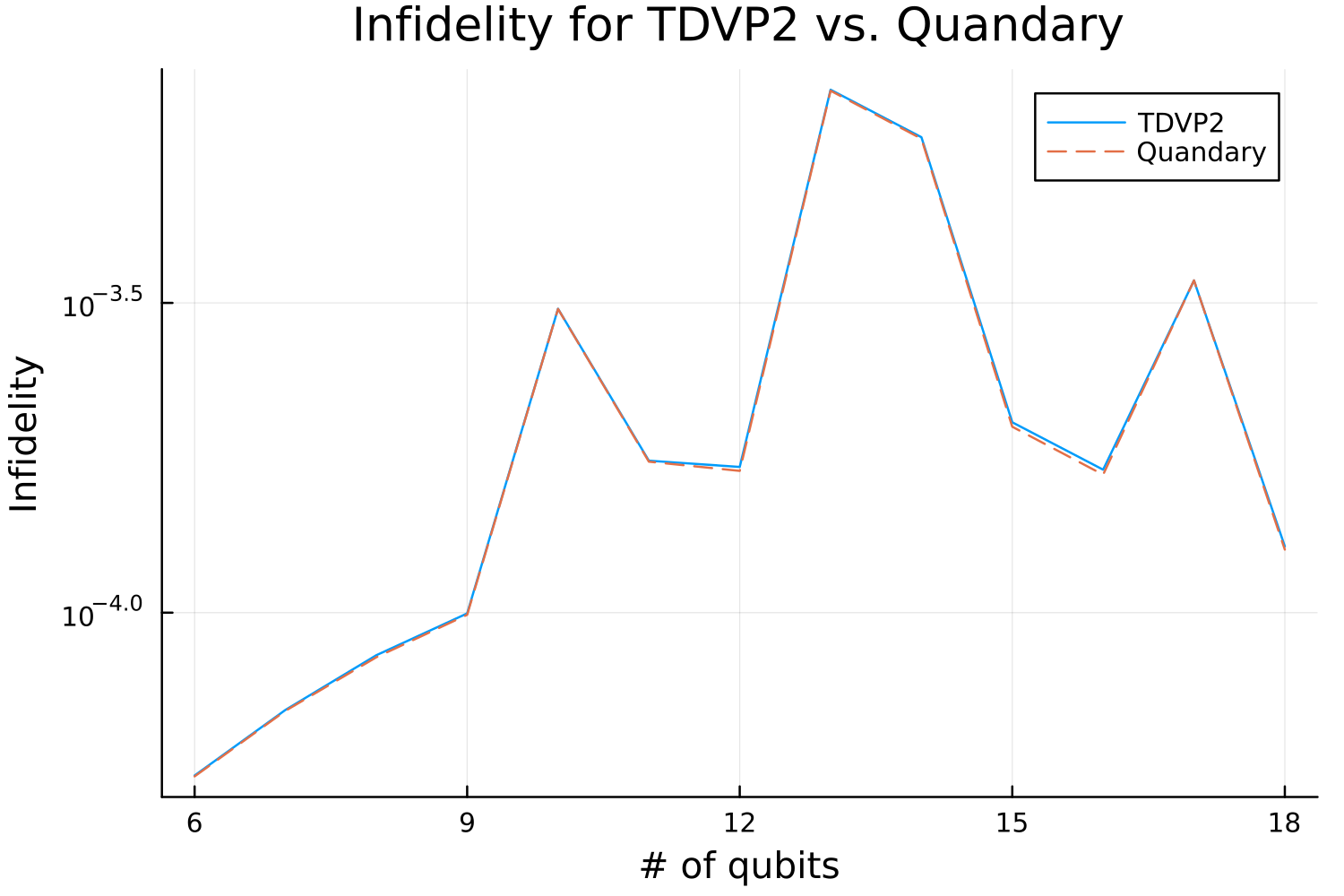}
    \caption{Comparing Quandary and TDVP-2 on a state-to-state transformation problem for different numbers of qubits with fixed SVD threshold, $\epsilon = 10^{-6}$. Left: run time. Right: infidelity.}
    \label{fig:runtime_scaling}
\end{figure}
To obtain a fair comparison with TDVP-2, which is implemented as a serial code, we also execute Quandary on a single core. The Tucker-BUG integrator was omitted from these test because the algorithm becomes very slow for $N\geq 10$ qubits. 

As is shown in Figure \ref{fig:runtime_scaling}, Quandary is faster than TDVP-2 for systems with 6-13 qubits. Past 13 qubits, TDVP-2 becomes faster than Quandary. In Quandary, the exponential increase of the state-vector leads to an exponential increase in run time. With TDVP-2, the bond dimension can be kept small and independent of $N$ because there is no coupling between the qubits ($J_{k\ell}=0$). This results in run times that only increase linearly in the number of qubits. In Figure \ref{fig:runtime_scaling} (right), we compare the infidelities from TDVP-2 and Quandary. We conclude that both techniques result in almost identical results.

\section{Conclusions}\label{sec_conclusions}

In this paper we have derived and numerically evaluated tensor train and Tucker tensor decomposition methods for solving Schr\"odinger's equation in the time domain. For time-integration of tensor trains, we reviewed the classical TDVP and TDVP-2 algorithms, and also outlined the MPS-BUG  algorithm. For Tucker decompositions, we sketched the Tucker-BUG algorithm. In numerical experiments we have evaluated the accuracy, memory usage and run times of the rank-adaptive TDVP-2, MPS-BUG and Tucker-BUG methods. The rank-adaptivity in all the methods is based the $\epsilon$-truncated singular value decomposition (SVD). Solution accuracy and run times have been compared with the conventional matrix-vector formulation implemented in the Quandary~\cite{GunPetDub-2021, QuandaryUser} software.

Numerical experiments based on composite quantum systems consisting of two-level sub-systems (qubits) were presented for two Hamiltonian models: the time-independent transverse Ising model, and a model of transmon qubits subject to time-dependent control pulses. For the two-level cases considered here, Tucker tensors exhibit an inherently binary behavior in the $\epsilon$-truncated SVD, where the factor matrices either have full size ($2 \times 2$), or the truncated size ($2 \times 1$). As a result, each sub-systen is either treated as fully entangled, or completely decoupled, from the rest of the system. Tucker tensors are therefore of limited use for two-level systems, but may be more effective for quantum systems with higher local dimensions, where more gradual truncation is possible.

For the transverse Ising model with short-range interactions, we have demonstrated that quantum simulations with tensor trains can be tractable for systems with 100 qubits (or more), using a laptop computer. This is possible because of the limited growth in entanglement during the evolution, leading to run times that grow as low-order polynomials, instead of exponentially with conventional matrix-vector formulations. For modest numbers of sub-systems, where the accuracy can be evaluated via exact solutions based on matrix exponentiation, we have analyzed the effect of the $\epsilon$-threshold in the truncated SVD, and how it needs to be related to the time-step. We also studied how the $\epsilon$-truncation influences the magnetization observable in a 20-qubit system.

When the state is entangled, the MPS-BUG method is observed to be slower than TDVP-2. A part of this can be attributed to the SVD factorizations during the truncation of the inflated bond dimensions at the end of each time-step. However, we have also observed that the bond dimension in MPS-BUG sometimes becomes larger than in TDVP-2, especially when the number of coupled sub-systems is large. The reason for this behavior is currently poorly understood.

The performance of the tensor decomposition methods for time-dependent Hamiltonian models was evaluated on a state-to-state transformation problem. Here, we considered a quantum system of $N$ coupled transmon qubits subject to time-dependent control pulses. Reference solutions were generated with the Quandary code and used to estimate the accuracy of the tensor decomposition methods. We studied how the $\epsilon$-threshold in the truncated SVD affects the accuracy and storage requirements in the TDVP-2 and Tucker-BUG algorithms. For an $\epsilon$ that gives the same accuracy as Quandary, the TDVP-2 algorithm uses less storage than Quandary, if $N$ is sufficiently large. For the same $\epsilon$, the Tucker-BUG algorithm gave similar accuracy as TDVP-2, but was slower and used significantly more storage. Finally, we compared the run time performance of TDVP-2 and Quandary as the number of qubits increases. In this case, we observed that TDVP-2 outperforms Quandary when the number of qubits exceeds $N\approx 13$, without degrading the fidelity of the state-to-state transformation.

One natural extension of the current work is to consider qibits that are connected in a two-dimensional lattice. Another interesting extension is to apply tensor train methods to model quantum decoherence by solving Schr\"odingers equation with a non-Hermitian Hamiltonian, e.g., within the Monte-Carlo wavefunction method.

\section*{Acknowledgments} The authors gratefully acknowledge financial support from the Office of Science, Advanced Scientific Computing Research (ASCR) within the U.S.~Department Of Energy, award \#~SCW1886. C.~Hodges-Heilmann was also supported by the U.S.~Department of Energy, Office of Science, ASCR, under award \#~DE-SC0025481, and by the National Science Foundation under grant \#~DMS-2436318. The authors thank Dr.~Dominik Sultz and Dr.~Jonas Kusch for helpful discussions on the BUG algorithm for tensor trains.

This work was performed under the auspices of the U.S. Department of Energy by Lawrence Livermore National Laboratory under Contract DE-AC52-07NA27344. This is contribution LLNL-JRNL-2016933.

\appendix

\section{The singular-value decomposition} \label{app:svd}
Consider a matrix $M \in \mathbb{C}^{m \times n}$. The (compact) Singular Value Decomposition (SVD) of $M$ is given by
\begin{align}
    M = U\Sigma V^{\dag},\quad
    U^{\dag}U = I_{r}, \quad V^{\dag}V = I_r,\quad 
    \Sigma = \textrm{diag}[\sigma_1, \sigma_2, ..., \sigma_{r}],\quad r\leq \min(m,n),\label{eq_SVD}
\end{align} 
where $U\in\mathbb{C}^{m\times r}$, $V\in\mathbb{C}^{n\times r}$, $\Sigma\in\mathbb{R}^{r\times r}$, and $r$ is the rank of $M$. The elements of $\Sigma$ are real and positive and are called the singular values of $M$. In the following we assume them to be sorted in descending order, $\sigma_1 \geq \sigma_2 \geq \cdots \geq \sigma_{r}$. 

Several algorithms for constructing low-rank approximate representations of tensors are based on low-rank approximations of the SVD. Denote by $\widetilde{M}$ an approximation of the matrix $M$, formed by setting the smallest singular values to zero,
\begin{equation}\label{eq:zero_singular_values}
    \widetilde{\Sigma} = \textrm{diag}[\sigma_1, ..., \sigma_k, 0, ..., 0].
\end{equation}
Because the matrices $U$ and $V$ have orthonormal columns, the Frobenius norm difference can be calculated exactly, 
\begin{align}
    \|M - \widetilde{M}\|_F &= \|U\Sigma V^{\dag} - U\widetilde{\Sigma}V^{\dag}\|_F =
    \|\Sigma - \widetilde{\Sigma}\|_F = \sqrt{\sum_{i = k + 1}^{r}\sigma_i^2}\leq \epsilon.\label{eq:eq_epsilon}
\end{align}
Here, $k\geq 1$ is the largest index such that the inequality holds for a given $\epsilon\geq 0$.
The matrix $\widetilde{M}$ has $k$ non-zero singular values and therefore has rank $k$. One can show that $\widetilde{M}$ is the rank-$k$ matrix that is the closest to $M$ in Froebenius norm, thus being the optimal approximation of $M$. 

\subsection{Order-4 Tensor SVD}
In TDVP-2 a SVD is performed on an order-4 tensor to decompose it into two order-3 tensors and move the orthogonality center to the right. This is done by reshaping the tensor into a matrix, performing a SVD on that matrix, followed by inverse reshape operations. 
For example, let $T \in \mathbb{C}^{b_i \times d_i \times d_{i + 1} \times b_{i + 2}}$ represent the merged core for sites $i+1$ and $i+2$. Reshape $T$ into a matrix,
\begin{equation*}
    \hat{T} := \text{reshape}(T) \in \mathbb{C}^{b_i d_i \times d_{i + 1}b_{n + 2}}.
\end{equation*}
Perform a SVD on $\hat{T}$,
\begin{align}
    \hat{U}\hat{S}\hat{V}^{\dag} := \hat{T},\quad
    \hat{U} \in \mathbb{C}^{b_i d_i \times s_i}, \, \hat{S}\in\mathbb{C}^{s_i \times s_i}, \, \hat{V}^{\dag} \in \mathbb{C}^{s_i \times d_{i + 1}b_{i + 2}}.\label{eq:order_4_svd}
\end{align}
Here we have $s_i = \min\{b_i d_i, d_{i + 1}b_{i + 2}\}$, and we set the intermediate bond dimension to $b_{i + 1} = s_i$. The inverse reshape operations is then performed on the matrix $\hat{U}$ and on the matrix $\hat{R} := \hat{S}\hat{V}^{\dag}$, where $\hat{R}\in \mathbb{C}^{s_i \times d_{i + 1}b_{i + 2}}$,
\begin{align}
    A_{[i+1]} := \text{reshape}^{-1}(\hat{U}) \in \mathbb{C}^{b_i \times d_i \times b_{i + 1}},\quad
    M_{[i+2]} := \text{reshape}^{-1}(\hat{R}) \in \mathbb{C}^{b_{i+1} \times d_{i + 1} \times b_{i + 2}}\label{eq:order_4_svd_dimensions}
\end{align}
Because $\hat{U}^\dagger \hat{U} = I_{s_i}$, the tensor $A_{[i+1]}$ is left-normalized, and the orthogonality center has moved to site $i+2$. By contraction over the joint index $b_{i+1}$ we can verify $T= A_{[i+1]}\cdot M_{[i+2]}$, if the SVD is performed without truncation. 

In the case of $\epsilon$-truncation in the SVD, the bond dimension is reduced to $b_{i + 1} < s_i$, where $b_{i+1}\geq 1$ is set as the largest index that satisfies the inequality \eqref{eq:eq_epsilon} for a given $\epsilon$.
The discarded singular values lead to the approximation $\widetilde{T} := A_{[i+1]}\cdot M_{[i+2]} \approx {T}$, where
\begin{align}
    \|T - \widetilde{T} \|_F = \| \hat{U}\hat{\Sigma}\hat{V}^{\dag} - \hat{U}\widetilde{\Sigma}\hat{V}^{\dag} \|_F = \|\hat{\Sigma} - \widetilde{\Sigma} \|_F \leq \epsilon.
\end{align}
Here, $\widetilde{\Sigma}$ is the approximate singular value matrix where the $b_{i+1}$ largest singular values of $\hat{\Sigma}$ are retained.

\bibliography{references}

\end{document}